\newtheorem{lemma}{Lemma}
\newtheorem{theorem}[lemma]{Theorem}
\newtheorem{corollary}[lemma]{Corollary}
\newtheorem{definition}[lemma]{Definition}
\newcommand{\beq}{\begin{equation}}
\newcommand{\eeq}{\end{equation}}
\newcommand{\beas}{\begin{eqnarray*}}
\newcommand{\eeas}{\end{eqnarray*}}
\newcommand{\NL}{\ensuremath{\mathsf{NL}}}
\newcommand{\NCzero}{\ensuremath{\mathsf{NC^0}}}
\newcommand{\NCone}{\ensuremath{\mathsf{NC^1}}}
\newcommand{\bbP}{\mathfrak P}
\newcommand{\bbm}{\mathfrak m}
\newcommand{\junk}[1]{}
\begin{document}

\title{Secure and scalable match: overcoming the universal circuit bottleneck using group programs}


\author{
Rajesh Krishnan\\
        Cosocket LLC\\
        \texttt{krash@cosocket.com}
\and
Ravi Sundaram\\
Northeastern University\\
\texttt{koods@ccs.neu.edu}
}

\maketitle

\begin{abstract}
\label{sec:abstract}
Confidential   Content-Based    Publish/Subscribe   (C-CBPS)   is   an
interaction (pub/sub) model that allows parties to exchange data while
still protecting  their security and privacy interests.  In this paper
we advance the state of the art in C-CBPS by showing how all predicate
circuits  in  \NCone~   (logarithmic-depth,  bounded  fan-in)  can  be
securely   computed   by   a   broker   while   guaranteeing   perfect
information-theoretic  security.  Previous   work  could  handle  only
strictly  shallower  circuits  (e.g.  those  with  depth  $O(\sqrt{\lg
n})$\footnote{We use $\lg$ to denote $\log_2$.})\cite{SYY99, V76}.  We
present three  protocols --  UGP-Match, FSGP-Match and  OFSGP-Match --
all three are based  on (2-decomposable randomized encodings of) group
programs  and handle  circuits  in \NCone.  UGP-Match is  conceptually
simple and has a clean proof  of correctness but it is inefficient and
impractical.  FSGP-Match  uses a ``fixed structure''  trick to achieve
efficiency   and   scalability.   And,   finally,   OFSGP-Match   uses
hand-optimized  group  programs  to  wring  greater  efficiencies.  We
complete  our  investigation  with  an experimental  evaluation  of  a
prototype implementation.
\end{abstract}

\section{Introduction}
\label{sec:intro}
\subsection{Motivation}
\label{subsec:motivation}
Pub/sub systems are an efficient means of routing relevant information
from publishers or content generators to subscribers or consumers. The
efficiency  of pub/sub  models comes  from the  fact  that subscribers
typically receive only a subset  of the total messages published.  The
process of  selecting messages for reception and  processing is called
filtering.    There    are   two    common    forms   of    filtering:
topic-based\footnote{In a  topic-based system, messages  are published
  to  ``topics''   or  named   logical  channels.  Subscribers   in  a
  topic-based system will receive all messages published to the topics
  to which they subscribe, and all subscribers to a topic will receive
  the  same messages. The  publisher is  responsible for  defining the
  classes  of  messages  to  which  subscribers  can  subscribe.}  and
content-based.    In   this    paper   we   focus   on   Content-Based
Publish/Subscribe systems (CBPS) where  messages are only delivered to
a subscriber  if the  attributes or metadata  of those  messages match
predicates defined  by the  subscriber. The subscriber  is responsible
for specifying his  preferences as a predicate over  the attributes of
the content produced by the  publisher. For the purposes of this paper
we will  assume that the  predicate is expressed  as a circuit  and we
will use the terms predicate  and circuit interchangeably.  There is a
third  party,  the  broker,   who  is  responsible  for  matching  the
subscriber's predicate to the  metadata produced by the publisher and,
in case of a match,  forwarding the associated data to the subscriber,
see Figure \ref{fig:pubsub} for  the basic interaction pattern (ignore
the encryptions  for now). The loose coupling  between subscribers and
publishers enabled by the broker allows for greater scalability.  CBPS
is an incredibly useful means  of disseminating information and can be
viewed  as an  abstraction  for a  variety  of different  applications
ranging from  forwarding of e-mail to query/response  systems built on
top of databases.

Scalability of CBPS systems and the distributed coordination of a mesh
of  brokers are  natural  concerns.  But, in  recent  times, with  the
proliferation of online social networks  and new forms of social media
an   even  more  pressing   concern  has   come  into   focus,  namely
confidentiality.  {\em Publisher confidentiality} refers to the notion
that  the publisher would  like to  keep his  content secure  from the
broker,  e.g., the  stock  exchange would  like  to keep  ticker/price
information   private   to    prevent   reselling.   {\em   Subscriber
  confidentiality} is  the notion that subscribers would  like to keep
their preferences private, e.g. a  hedge fund would not wish to reveal
their interest  in a particular stock.  The  widespread development of
web  and mobile  apps  has created  a  proliferation of  third-parties
involved  in the  business of  handling user  preferences  and routing
content, e.g. iPhone and Android apps, Facebook and Twitter apps.  But
even in earlier  times, there has always been  the need for preserving
the privacy of both the  publisher and the subscriber.  For example, a
database server  must not  learn what information  was requested  by a
client, and yet  have the assurance that the  client was authorized to
have  the information  that was  sent; a  mail relay  must be  able to
forward the relevant emails without learning the contents of the email
or  the  subscribers of  a  mail-list.  This  motivates the  need  for
Confidential  Content-Based Publish/Subscribe  (C-CBPS)  schemes.  See
Figure \ref{fig:pubsub}  (note that  the publisher's metadata  and the
subscriber's predicate are encrypted).

\begin{figure}
\centering
\includegraphics[width=3.4in, height=3in]{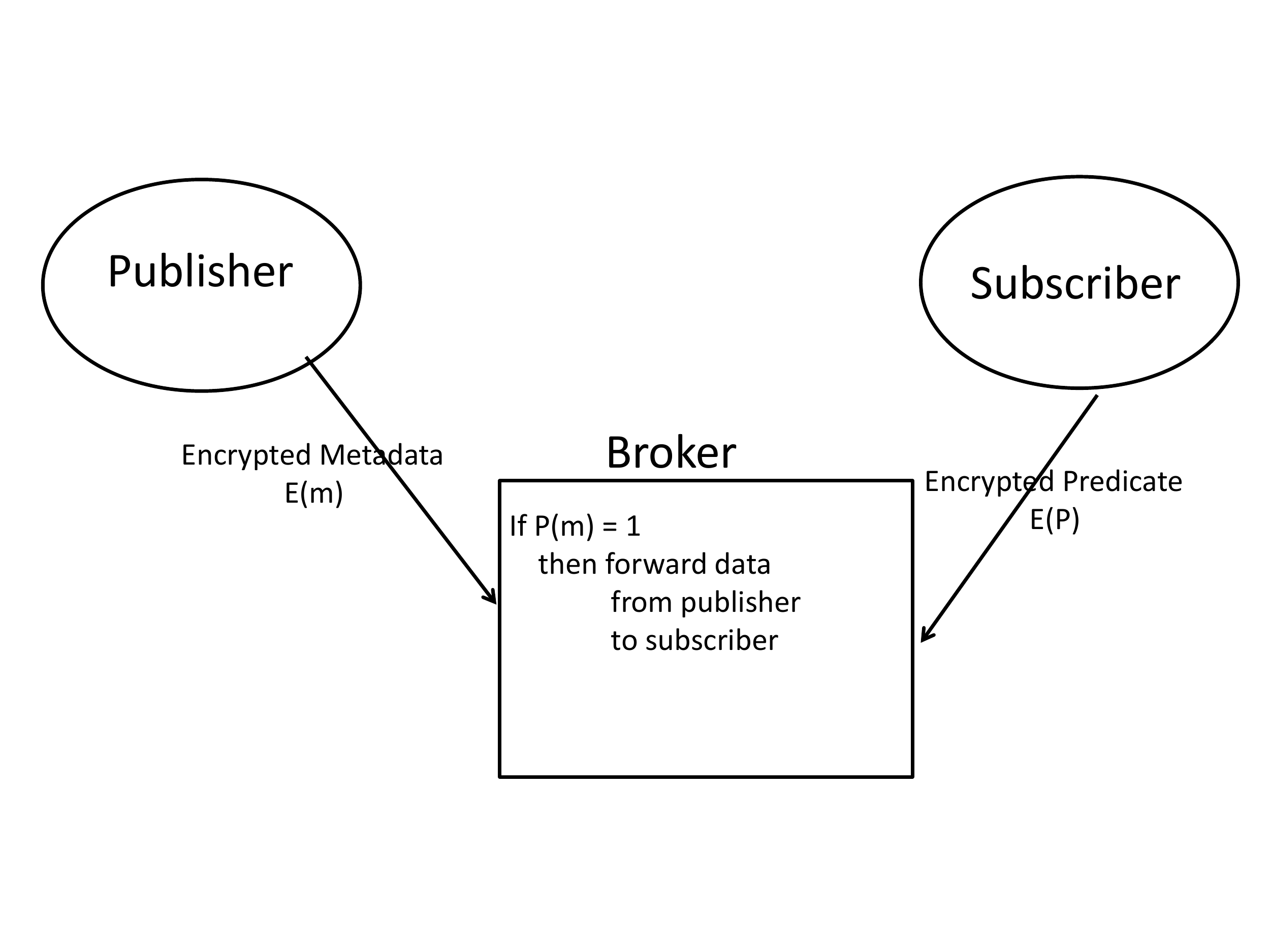}
\vspace{-0.5in}
\caption{The C-CBPS interaction model}
\label{fig:pubsub}
\end{figure}

\subsection{Our Results}
\label{subsec:results}
Until recently, the problem of  securing the privacy of the publishers
and subscribers had  seemed to be a forbidding  task. Practical C-CBPS
systems were able only to  handle the confidentiality of exact matches
and some minor variants \cite{RR06}. Sophisticated schemes that handle
more  expressive  subscriber  predicates  were  too  slow  to  use  in
practice. In general, the  schemes were either practical or expressive
but  not   both.   However,  the  recent   and  dramatic  breakthrough
\cite{G09} in {\em fully homomorphic encryption (FHE)}\footnote{FHE is
  an encryption scheme  that allows a third party  to take encryptions
  $E(a)$ and  $E(b)$ of two  bits, $a$ and  $b$, and obtain  from them
  encryptions  of $E(\bar{a})$,  $E(\bar{b})$, $E(a)  \vee  E(b)$, and
  $E(a) \wedge E(b)$, {\em without  access to the private key used for
    encryption}.}  \cite{G09} has spurred a flurry of activity in this
space.  FHE has  improved  the efficiency  of  {\em Secure  Multiparty
  Computation (SMC)} \cite{KL07} schemes  of which C-CBPS is a special
case.   This   is  an  asymptotic  and   simultaneous  improvement  in
communication  and computational overheads  but the  transformation to
obtain a  C-CBPS scheme  is not  trivial and does  incur some  loss in
efficiency.  Though  substantial progress  is being made  in improving
the speed  of FHE schemes,  it is still  the case today that  fast and
practical realizations are  a ways off. Motivated by  FHE we went back
and took  a second  look at a  scheme that  is nearly two  decades old
\cite{FKN94}.  Reusing the technology  of group programs \cite{B89} in
the  context of  decomposable randomized  encodings \cite{A11}  we are
able not only to obtain a theoretical advance on the state of the art,
but  we  have   in  fact,  produced  a  protocol   that  is  fast  and
practical.  Furthermore, our  main results  guarantee security  in the
{\em  unconditional}  information-theoretic  setting whereas  the  FHE
advance is in the computational setting.

We have two main contributions in this paper.

\begin{itemize}
\item  We   present  an  information-theoretically   secure  protocol,
  UGP-Match  employing {\em  universal  group programs}  to match  any
  predicate  in   \NCone.   UGP-Match  demonstrates   the  theoretical
  possibility of attaining \NCone~ but is not practical.

\item We  then show  how a  ``fixed structure'' trick  gets us substantially 
  closer to
  creating a  practical and real-world protocol,  FSGP-Match, to achieve
  fast and secure  matching of any predicate in  \NCone. FSGP-Match is
  identical to UGP-Match in security guarantees but is much faster. We
  then squeeze  out additional  efficiency using hand  optimized group
  programs  to achieve  our  fastest protocol,  OFSGP-Match, which  is
  closer to being practical.

\end{itemize}

We  have built
prototypes of all  three protocols and created an  implementation of a
real-world pub-sub system. We present the results of our evaluation on
a testbed. The good news is that our schemes provide us with a level
of expressivity (\NCone) that was previously unattainable and in the 
strongest model of security (information-theoretic). The bad news is that
there is still a gap to be overcome to make these protocols truly
practical - we show in Section \ref{sec:implementation} that if  a subscriber and publisher 
wish to compute a secure match based on, say, the Hamming distance (chosen 
as a representative function of interest) between 
their respective (private) bit-vectors then on today's laptops one cannot 
go above length 16 for the bit-vectors and still compute the match under 1s. 
Further, such single message protocols that achieve perfect information
theoretic security in the context of publish/subscribe
incur a tremendous overhead: for every published notification, the publisher
and all subscribers need to prepare fresh messages to send to the broker.

The above  C-CBPS protocols can be  converted to use  shared seeds and
pseudorandom  generators (PSRGs)  rather than  shared  randomness. The
resulting protocols are easily seen  to be secure in the computational
setting based  on the  unpredictability of the  PSRGs.

The focus of this paper is  on achieving a secure match algorithm that
is scalable.   Therefore we concentrated on  bounded depth predicates,
i.e. predicates in \NCone. But  we also have additional asymptotic and
complexity-theoretic  improvements  which   are  not  practical.   For
completeness we mention them here but  we do not present the proofs or
constructions  in this  paper.   One, using  {\em universal  branching
  programs} we  can give an  information-theoretically secure (C-CBPS)
protocol   for  matching  any   predicate  in   \NL  (nondeterministic
logspace).   Second,   using  randomized  encodings  we   can  give  a
\emph{computationally secure} (C-CBPS) protocol to match any predicate
in P (polynomial time).

\subsection{Related Work}
\label{subsec:related}

Several  practical  CBPS systems  have  been  built  for supporting  a
variety of distributed applications.  Siena \cite{CRW01} is one of the
most well known; Gryphon  \cite{BCMNSS99} and Scribe \cite{DGRS03} are
others. Work  on the  security aspects, namely  C-CBPS in  the systems
community is less  than a decade old. A  C-CBPS system supporting only
equality matches  is presented in \cite{SL05} and  a system supporting
extensions   to  inequality   and  range   matches  is   presented  in
\cite{RR06}.  Both  these systems are in the  setting of computational
security.   However,   neither    of   these   systems   satisfy   our
confidentiality  model since  they allow  the broker  to see  that the
encrypted predicates from two different subscribers are identical.

As mentioned earlier,  C-CBPS is a special case of SMC  and there is a
large  and  comprehensive  literature  on  SMC  in  the  cryptographic
community  \cite{G01, G04,  KL07}. The  special case  of  2-party SMC,
known  as Secure Function  Evaluation (SFE),  is an  important subcase
\cite{K09}, that is (different from but) closely related to the C-CBPS
model.   Research  in  SMC  started  nearly 3  decades  ago  with  the
path-breaking work of Yao, \cite{Y82}, and is broadly divided into two
classes  of   protocols  -  those  that   are  computationally  secure
(i.e.  conditioned on  certain  complexity-theoretic assumptions)  and
those that are information-theoretically (or unconditionally) secure.

In the  computational setting the  initial work of Yao  \cite{Y82} and
Goldreich, Micali and  Wigderson \cite{GMW87} has led to  a large body
of  work  \cite{G01,  G04, KL07}.  In  the  past  decade a  number  of
practical  schemes based  on garbled  circuits and  oblivious transfer
have  emerged for the  honest-but-curious adversarial  model: Fairplay
\cite{BNP08}, Tasty  \cite{HKSSW10} and VMCrypt  \cite{M11}. But these
schemes  are  for  SMC   and  are  inefficient  and  impractical  when
restricted to the special case of C-CBPS because of the need to handle
universal circuits \cite{KS08}. Again, as mentioned before, the recent
breakthrough  in FHE  \cite{G09}  holds out  hope  for more  efficient
protocols for SMC but practical protocols are yet to be realized.

On the information-theoretic front  the works of BenOr, Goldwasser and
Wigderson   \cite{BGW88}  and   Chaum,  Cr{\'e}peau   and  D{\aa}mgard
\cite{CCD88}  proved completeness  results for  SMC.  There  have been
additional improvements  \cite{DN07} using the  subsequently developed
notion of randomized encodings  \cite{AIK04, AIK05}.  The seminal work
of Feige, Kilian  and Naor \cite{FKN94}, which lays  the groundwork of
this paper,  considered an interaction model, the  FKN model (detailed
in Definition \ref{def:FKNModel}, SubSection \ref{subsec:terminology})
that is closely  related to, but different from  the C-CBPS model. For
the FKN  model \cite{FKN94}  demonstrated protocols for  predicates in
\NCone~ and  also in \NL.   Though $\NCone \subseteq  \NL$ \cite{AB09,
  P94} the  \NCone~ protocol is the one  we build on in  this paper to
achieve scalable and secure matching in the C-CBPS model.

The protocol of  Feige, Kilian and Naor \cite{FKN94}  in the FKN model
is rendered  impractical when used for  the case of  C-CBPS because of
the need (for the broker) to compute universal circuits \cite{V76}. We
explain this  briefly: the FKN  model involves the broker  computing a
known public  function $f(\bbP,\bbm)$  given the encrypted  version of
$\bbm$, data from the publisher, and $\bbP$, data from the subscriber.
C-CBPS is  the special case  where $f(\bbP,\bbm) =  \bbP(\bbm)$, i.e.,
$\bbP$  is a predicate  (or the  encoding of  one) and  $f$ is  a {\em
  universal } function that simulates  $\bbP$ on $\bbm$.  The need for
universal circuits  poses a barrier - both  theoretical and practical.
Though, some optimizations for universal circuits have been discovered
\cite{KS08}, the  current state of the  art is that  the subscriber is
restricted to predicates of strictly sublogarithmic depth \cite{SYY99,
  V76}  -  more specifically,  the  predicate  is  constrained by  the
condition $d*(\lg  s) = O(\lg n)$ where  $d$ is the depth  and $s$ the
size of  the (bounded fanin)  subscriber predicate. In  general, since
$s$ can be as  large as $2^d$ this means that $d$  is restricted to be
$O(\sqrt{\lg n})$.  In this paper we show how to bypass the need for a
universal circuit and  achieve $d = \Omega(\lg n)$  i.e.  we can match
any \NCone~  predicate in the  C-CBPS interaction model.  We  show how
carefully constructed  2-decomposable randomized encodings \cite{IK00,
  IKOS08,  IKOS09} can be  used to  securely and  efficiently simulate
arbitrary circuits of logarithmic depth.

\section{Preliminaries}
\label{sec:preliminaries}
\subsection{Adversarial Model}
In our C-CBPS  model there are 3 parties --  the broker, the publisher
and the subscriber.  The publisher and subscriber have a shared random
string $r$ known only to the  two of them. Meta-data $\bbm$ is private
to the  publisher and predicate  $\bbP$ is private to  the subscriber.
The publisher and subscriber each  have a separate and private channel
to the  broker. They are to each  send a single message  to the broker
from  which the  broker  should  be able  to  correctly, securely  and
efficiently  compute the  value of  $\bbP(\bbm)$ but  learn absolutely
nothing  else. We  work  in the  information-theoretic security  model
where we don't make  any constraining assumptions on the computational
resources  available  to  the  broker.   However,  the  publisher  and
subscriber are  restricted to be  polynomial-time probabilistic Turing
machines. In fact, given  the shared randomness they are deterministic
polynomial-time  Turing  machines. And  in  keeping with  Kerckhoffs's
principle \cite{Kerckhoff1883} it is assumed that the broker knows the
details  of   the  algorithm/protocol  being  used,   i.e.,  it  knows
everything except for $\bbP$,  $\bbm$ and $r$.  This adversarial model
is captured in Figure \ref{fig:adversarialmodel}

\begin{figure}
\centering
\includegraphics[width=3.4in, height=3in]{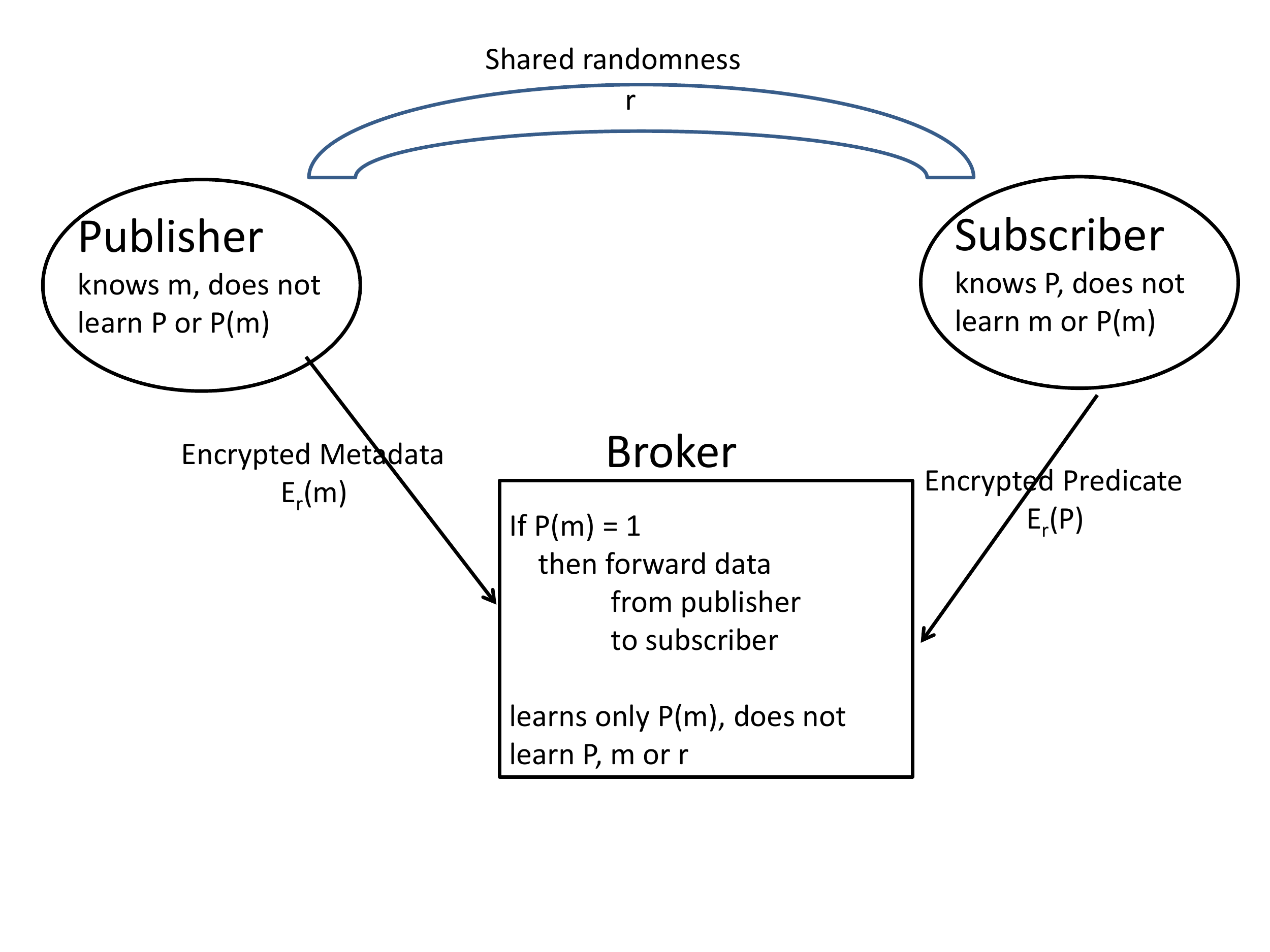}
\vspace{-.5in}
\caption{Adversarial model}
\label{fig:adversarialmodel}
\end{figure}

\subsection{Measure}
We denote the  length of the meta-data $\bbm$ as $n  = |\bbm|$. In the
case of the predicate $\bbP$ the  relevant measure is the depth and we
parameterize it as a multiple of  $\lg n$, to be precise we denote the
depth  of $\bbP$  by $\kappa\lg  n$ where  $n$ is  the  parameter that
asymptotically  goes to  infinity while  $\kappa$ is  a  constant. The
reason   for  parameterizing  in   this  way   is  that   the  obvious
parameterization of the  {\em size} of $\bbP$, in  terms of the number
of gates,  is not relevant  as it is  still an open problem  to handle
arbitrary  polynomial-sized  predicate  circuits  in  the  information
theoretic setting. We remind the reader that the defining contribution
of  this  paper   is  showing  how  to  handle   circuits  in  \NCone,
i.e. logarithmic-depth circuits.

As we will see from the  subsequent sections of this paper, the broker
will get two (sub)sequences of  group elements each from the publisher
and subscriber that he will  interlace and multiply together to obtain
$\bbP(\bbm)$. We  denote by  $L$ the length  of the sequence  that the
broker composes from the  shares he receives. The efficiency question,
thus, becomes  given an $n$  and a $\kappa$  what is the  smallest $L$
that a given  protocol achieves. In what follows we  will see that, in
the  case  of  the  protocol  based  on  Valiant's  universal  circuit
\cite{V76}, $L = n^{\Omega(\kappa\lg n)}$ which is non-polynomial.  In
general  the goal of  this paper  is to  achieve $L$  which will  be a
polynomial in $n$ but the lower  the degree of the polynomial the more
efficient  the protocol.  We will  show that  UGP-Match achieves  $L =
n^{12\kappa+1}\lg  n$.   With  FSGP-Match   we  bring  this   down  to
$L=4n^{2\kappa+2})$ and then finally with OFGSP-Match we bring it down
to $L=2n^{2\kappa+1}$. We point out that these upper bounds are exact,
i.e. we can analyze these constructions down to the exact constant and
so  do not need  to employ  the big-oh  notation.  For  convenience we
present our results in the form of a table.

\begin{table}[h]
\centering
\begin{tabular}{||c|cc||}
\hline \hline
Protocol & Complexity(L) & In words\\ \hline \hline
Universal Circuit & $n^{\Omega(\kappa\lg n)}$ & (super-poly-time)\\ \hline
UGP-Match & $\tilde{O}(n^{20\kappa+2})$ & (poly-time)\\ \hline
FSGP-Match & $4n^{2\kappa+2}$ & '' \\ \hline
OFSGP-Match & $2n^{2\kappa+1}$ & ''\\
\hline \hline
\end{tabular}
\caption{This table shows the complexity of the various protocols.}
\label{tab:complexity}
\end{table}

\subsection{Terminology and Propositions}
\label{subsec:terminology}
We set up some terminology and  definitions for use in the rest of the
paper. We  also state and/or  prove some basic propositions  that will
set the ground for the results to come later.  

For the sake of completeness we define the C-CBPS model formally.
\begin{definition}[C-CBPS]
The 3  parties in the C-CBPS  model and their states  of knowledge are
captured in Figure \ref{fig:adversarialmodel}. There is a single round
of  communication  where the  publisher  and  subscriber  each send  a
private  message ($E_r(\bbm)$  and $E_r(\bbP)$,  respectively)  to the
broker who is then able to efficiently compute $\bbP(\bbm)$ such that

\noindent{\bf Correctness}  $\forall \bbm,  \bbP$, and $r$,  given the
encrypted   messages  $E_r(\bbm),   E_r(\bbP)$  the   broker  computes
$\bbP(\bbm)$ correctly all the time.

\noindent{\bf  Security}  $\forall \bbm,  \bbP$,  and  $r$, given  the
encrypted  messages $E_r(\bbm), E_r(\bbP)$  the broker  learns nothing
whatsoever about $\bbm, \bbP$ (other than the value of $\bbP(\bbm)$).

\end{definition}

We will use the language  of multiplicative groups. For our purposes a
\emph{group} is a  just a set of elements with  a binary operation and
inverses. We  let G  denote a generic  multiplicative group.   We will
need $G$ to be an unsolvable group (for reasons to be explained later)
and so we can take it  to be $S_5$ the symmetric group of permutations
of  a  $5$-element set.   By  default we  will  use  the standard  and
implicit  one-line notation  derived from  Cauchy's  two-line notation
\cite{WikiPermNotation} to represent a permutation.
\begin{definition} [Cycle]
A permutation  $g \in S_5$ is said  to be a \emph{cycle}  if its graph
consists of  exactly one cycle of length  $5$. For example, $(5  3 4 1
2)$  is a  cycle  because its  graph  is the  cycle  $1 \rightarrow  5
\rightarrow 2 \rightarrow 3 \rightarrow 4 \rightarrow 1$.
\end{definition}

When  we use the  product-of-cycles notation  then we  will explicitly
have a subscript ``cycle'' at the end.   For example $(2 4 5 1 3) = (1
2 4)(3 5)_{\mbox{cycle}}$ represents the permutation $1 \rightarrow 2,
2 \rightarrow  4, 3 \rightarrow 5,  4 \rightarrow 1,  5 \rightarrow 3$
with graph $1 \rightarrow 2 \rightarrow 4 \rightarrow 1, 3 \rightarrow
5 \rightarrow 3$.

For the rest of this paper we  will use $\alpha \in S_5$ to denote the
cycle $(23451)$. And, the identity in the group is $1_{S_5} = (1 2 3 4
5)$.

Our  protocols  will involve  group  programs  which  in turn  involve
sequences  of group elements  and their  products. This  motivates the
following:

\begin{definition}[Value of a sequence]
Given a sequence $S$ of  group elements $g_1$, $g_2$, $\ldots$, $g_L$,
the {\em value} of the sequence $\mbox{Value}(S)$ is defined to be the
product of the sequence elements in order, i.e.
\[ \mbox{Value}(S) = \prod_{i=1}^L g_i = g_1g_2\ldots g_L. \]
\end{definition}

\begin{definition}[Blinding]
Given a sequence $S$ of  group elements $g_1$, $g_2$, $\ldots$, $g_L$,
$\mathcal{BS}(S)$ is used to denote the distribution over sequences of
the form
\[ g1\cdot r_1, r_1^{-1}\cdot g_2\cdot r_2,\ldots,  r_L^{-1}\cdot g_L, \] 
generated by choosing each $r_i, \forall_i 1\leq i \leq L-1$ uniformly
and independently from $G$.  We overload the term $\mathcal{BS}(S)$ to
also  refer   to  a  specific  sequence  selected   according  to  the
distribution $\mathcal{BS}(S)$ and the context should be sufficient to
resolve any  ambiguity.  The $r_i$  are referred to as  {\em blinders}
and  {\em blinding}  $S$ refers  to the  act of  selecting  a sequence
according to the distribution $\mathcal{BS}(S)$.
\end{definition}

\begin{lemma}[The Blinding Lemma]
\label{lemma:blindinglemma}
Given a  sequence of group elements,  $S$, of length  $L$, the blinded
sequence $\mathcal{BS}(S)$ has the following two properties:

\noindent{\bf Preserves value}:  $Value(\mathcal{BS}(S)) = Value(S)$

\noindent{\bf  Uniform distribution}:  $\mathcal{BS}(S)$  is uniformly
distributed  over the  space of  all  sequences of  group elements  of
length $L$  with the  same value,  i.e. for any  sequence $S'  = g'_1,
g'_2\ldots g'_L$ we have that
\[ \mbox{Pr}(\mathcal{BS}(S) = S'| \mbox{Value}(S') = \mbox{Value}(S)) = \frac{1}{|G|^{L-1}} \] 
where  the probability  is measured  over  the random  choices of  the
blinders.
\end{lemma}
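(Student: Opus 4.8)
The plan is to verify the two assertions separately, the first by a direct telescoping computation and the second by exhibiting an explicit bijection between blinder-tuples and target sequences of the correct value.

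For the \emph{value-preservation} property, I would simply write out the product of the blinded sequence and observe that consecutive blinders cancel. Concretely, $\mathrm{Value}(\mathcal{BS}(S)) = (g_1 r_1)(r_1^{-1} g_2 r_2)(r_2^{-1} g_3 r_3)\cdots(r_{L-1}^{-1} g_L)$; using associativity in $G$ and the fact that $r_i r_i^{-1} = 1_G$ for each $i$ with $1 \le i \le L-1$, every interior blinder pair telescopes, leaving exactly $g_1 g_2 \cdots g_L = \mathrm{Value}(S)$. This holds for every choice of the $r_i$, not merely with high probability, so it is an identity of sequences rather than a distributional claim. This step is routine and is not the obstacle.

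For the \emph{uniformity} property, the key observation is that the map $\Phi\colon G^{L-1} \to G^L$ sending $(r_1,\dots,r_{L-1})$ to the blinded sequence $(g_1 r_1,\; r_1^{-1} g_2 r_2,\;\dots,\;r_{L-1}^{-1} g_L)$ is injective, and its image is exactly the set of all sequences $S' = (g'_1,\dots,g'_L)$ with $\mathrm{Value}(S') = \mathrm{Value}(S)$. Injectivity is immediate: the first coordinate $g_1 r_1$ determines $r_1 = g_1^{-1} g'_1$, then the second coordinate determines $r_2$, and so on, so $\Phi$ has a left inverse given by $r_i = (g_1 g_2 \cdots g_i)^{-1}(g'_1 g'_2 \cdots g'_i)$ read off from partial products. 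For the image characterization: one inclusion is the value-preservation property just proved; for the reverse, given any $S'$ with the right value, define $r_i := (g_1\cdots g_i)^{-1}(g'_1\cdots g'_i)$ for $1 \le i \le L-1$ and check that $\Phi(r_1,\dots,r_{L-1}) = S'$ — the $i$-th entry works out to $r_{i-1}^{-1} g_i r_i = (g'_1\cdots g'_{i-1})^{-1}(g_1\cdots g_{i-1}) \cdot g_i \cdot (g_1\cdots g_i)^{-1}(g'_1\cdots g'_i)$, and after cancellation in $G$ this equals $g'_i$ (the boundary cases $i=1$ and $i=L$ are handled by the conventions that there is no $r_0$-factor on the left of the first term and no $r_L$-factor on the right of the last, together with $\mathrm{Value}(S') = \mathrm{Value}(S)$ forcing the last entry). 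Since the $r_i$ are chosen independently and uniformly from $G$, the pushforward of the uniform distribution on $G^{L-1}$ under the bijection $\Phi$ is the uniform distribution on its image; hence each sequence $S'$ of the correct value is hit with probability exactly $1/|G|^{L-1}$, which is the claimed conditional probability.

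The only mildly delicate point — and the place I would be most careful — is bookkeeping at the two endpoints of the sequence, since the first and last terms each carry only one blinder rather than two, and one must confirm that the constraint $\mathrm{Value}(S') = \mathrm{Value}(S)$ is exactly what makes the final coordinate come out right once $r_1,\dots,r_{L-1}$ have been fixed by the earlier coordinates (there is no free blinder left to absorb a discrepancy). Everything else is a straightforward consequence of the group axioms — associativity and the existence of inverses — and does not use any special structure of $S_5$.
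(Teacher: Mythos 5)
Your proposal is correct, and its core is the same as the paper's: for any target sequence $S'$ of the same value, the blinders are forced one by one to the unique values $r_i = (g_1\cdots g_i)^{-1}(g'_1\cdots g'_i)$, each occurring with probability $1/|G|$ independently, which yields $1/|G|^{L-1}$. Your bijection/pushforward packaging of this fact is a slightly cleaner presentation than the paper's two-step argument (first counting the space of same-value sequences as $|G|^{L-1}$, then chaining conditional probabilities over the first $L-1$ coordinates), and your endpoint check that $\mathrm{Value}(S')=\mathrm{Value}(S)$ forces the last coordinate matches the paper's observation exactly.
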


\begin{proof}
It is  easy to see that  blinding preserves the value  of the sequence
because the blinders,  the $r_i$, cancel out when  the elements of the
blinded sequence are multiplied together.

Now, we need to show the uniform distribution property. First, observe
that  the space  of all  sequences $S'$  such that  $\mbox{Value}(S) =
\mbox{Value}(S')$ has  size exactly  $|G|^{L-1}$.  This is  because we
can pick  the first $L-1$  elements, $g'_1, g'_2,\ldots,  g'_{L-1}$ of
$S'$ arbitrarily from the group  (in $|G|^{L-1}$ ways) but having done
that then  (because these elements are  chosen from a  group) there is
exactly   one   value   for    the   $n$-th   element   $g_L$   namely
$g_{L-1}^{-1}\cdot            g_{L-2}^{-1}            \cdot\ldots\cdot
g_1^{-1}\cdot\mbox{Value}(S)$    such    that   $\mbox{Value}(S')    =
\mbox{Value}(S)$.

Hence, to  compute $\mbox{Pr}(\mathcal{BS}(S) =  S')$ we just  need to
compute  the probability  that the  first  $L-1$ elements  of the  two
sequences  ($\mathcal{BS}(S)$ and $S'$)  match because  the condition,
that  $\mbox{Value}(S')  =  \mbox{Value}(S)$  along with  the  already
proven  fact that  $\mbox{Value}(\mathcal{BS}(S))  = \mbox{Value}(S')$
implies that the $L$-th elements must automatically match if the first
$L-1$ match.

Let $\mathcal{BS}(S) = b_1, b_2,\ldots, b_L$. Then what we have argued
is that
\begin{align*}
  & \mbox{Pr}((b_1 = s'_1) \wedge (b_2 = s'_2) \wedge \dots \wedge (b_L = s'_L)|(\mbox{Value}(S') = \mbox{Value}(S)))\\
  & =\mbox{Pr}((b_1 = s'_1) \wedge (b_2 = s'_2) \wedge \dots \wedge (b_{L-1} = s'_{L-1})| (b_L = s'_L))\\
  & =\mbox{Pr}((b_1 = s'_1) \wedge (b_2 = s'_2) \wedge \dots \wedge (b_{L-1} = s'_{L-1})) \\
  & =\mbox{Pr}(b_1 = s'_1)  \times \mbox{Pr}((b_2 = s'_2)|(b_1 = s'_1))\\ 
  & \times\mbox{Pr}((b_3 = s'_3)|(b_1 = s'_1) \wedge (b_2 = s'_2)) \\ 
  & \times \dots \\
  & \times \mbox{Pr}((b_{n-1} = s'_{n-1})|  (b_1 = s'_1) \wedge (b_2 =
  s'_2) \wedge \ldots \\
  & \hspace{3cm} \ldots \wedge (b_{L-1} = s'_{L-1}))
\end{align*}
But, by the definition of blinding, the above is 
\begin{align*}
  & \mbox{Pr}(r_1 = s_1^{-1}\cdot s'_1) \times \mbox{Pr}(r_2 = s_2^{-1}\cdot s_1^{-1}\cdot s'_1 \cdot s'_2) \\
  & \times \dots \times \mbox{Pr}(r_{L-1} = \prod_{i=1}^{L-1}s_i^{-1}\prod_{i=}^{L-1}s'_i) \\
  & ==\frac{1}{|G|} \times \frac{1}{|G|} \times\dots\times \frac{1}{|G|}\\
  & =\frac{1}{|G|^{L-1}}.
\end{align*}

But this  means that  $\mathcal{BS}(S)$ is uniformly  distributed over
the space of all sequences of group elements with the same value since
the space of all such sequences is of size exactly $|G|^{L-1}$, as has
been argued earlier.

\end{proof}

The following definition, of a group program, is central to this paper
and is presented in visual form in Figure \ref{fig:groupprogram}.

\begin{definition}[Group Program]
Let  $\alpha$  be an  element  of $S_5$.  A  group  program of  length
\emph{$L$}   is  $(g_{1}^{0}$,  $\ldots$,   $g_L^{0})$,  $(g_{1}^{1}$,
$\ldots$ ,$g_L^{1})$, $(k_{1}$, $\ldots$,  $k_L)$ where for any $i,j$:
$g_{i}^{j}  \in S_{5}$ and  $k_{i} \in  \{1,\ldots,n\}$.  We  say that
this program $\alpha$-computes  $f:\{0,1\}^{n} \rightarrow \{0,1\}$ if
$\forall x$,
\[ f(x)=1 \Rightarrow \prod ^{\ell} _{i=1} g_{i}^{x_{k_i}} = \alpha \]
\[ f(x)=0 \Rightarrow \prod ^{\ell} _{i=1} g_{i}^{x_{k_i}} = 1_{S_5}; \]
which  we can write  compactly as  $\forall x  : \prod  ^{\ell} _{i=1}
g_{i}^{x_{k_i}} = \alpha^{f(x)}$.
\end{definition}

\begin{figure}
\centering
\includegraphics[width=3.4in, height=3in]{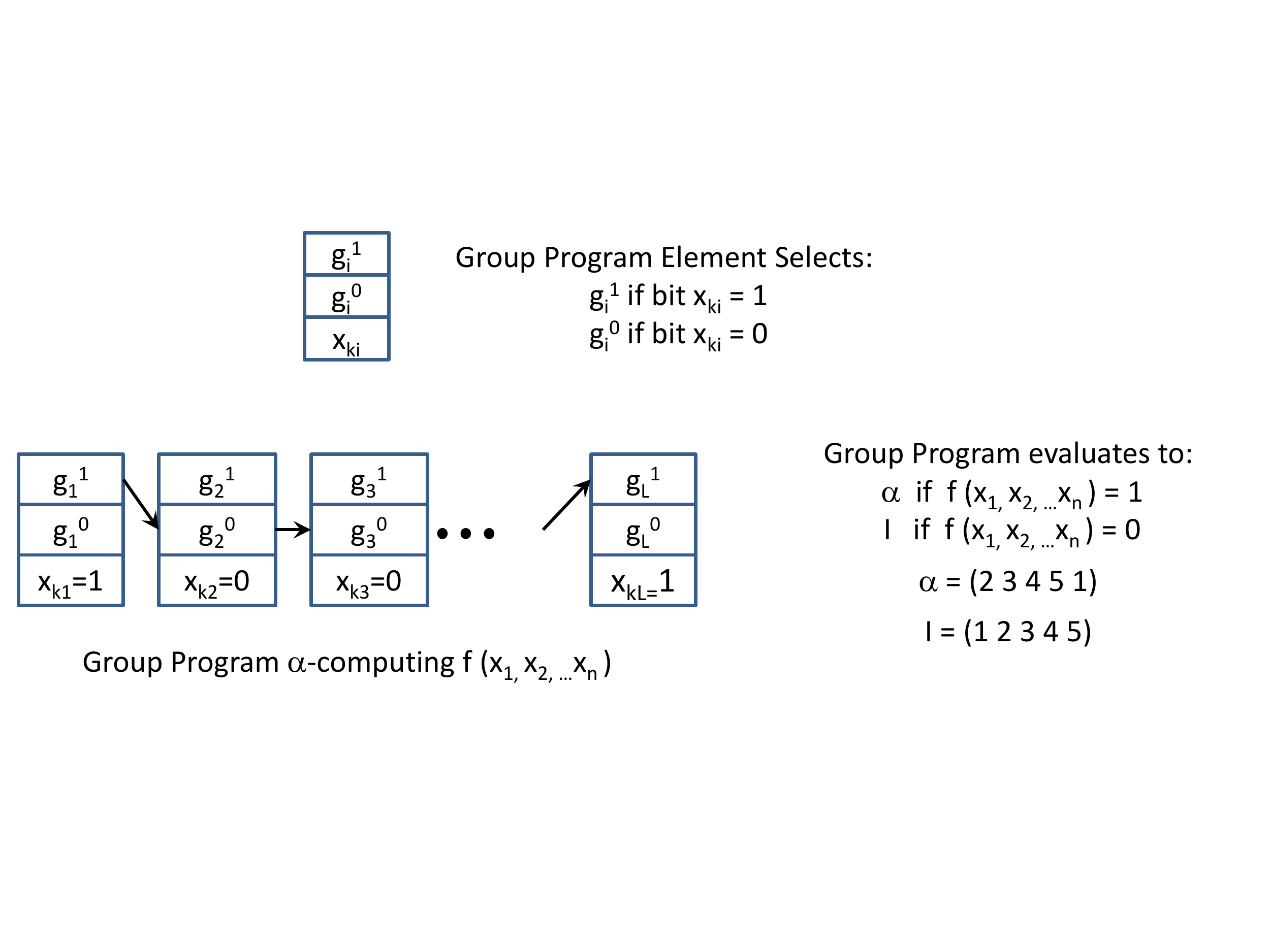}
\vspace{-0.8in}
\caption{The form of a Group Program}
\label{fig:groupprogram}
\end{figure}

We will  consider predicates  represented as $n$-input,  single output
(bounded  fan-in)  circuits   of  AND($\wedge$  two-input),  OR($\vee$
two-input) and  NOT($\neg$ single input) gates. Our  definition of the
depth  of  a  circuit  is  slightly non-standard  in  that  we  ignore
NOT($\neg$) gates. This is because  of the Barrington Transform (to be
elaborated below) which transforms a circuit into a group progam whose
length  depends  only  on  the  depth  of  the  circuit  in  terms  of
AND($\wedge$) and OR($\vee$) gates.

\begin{definition}[Depth of a Circuit]
The depth of a circuit is defined to be the number of AND and OR gates
in the  longest path from  an input to  the output. (NOT gates  do not
count towards depth).
\end{definition}

In his seminal paper \cite{B89}, on the way to showing that \NCone~ is
computable by  fixed-width branching programs,  Barrington showed that
any   logarithmic-depth   circuit    can   be   transformed   into   a
polynomial-length  group  program  -  a  transformation  we  term  the
Barrington Transform.

\begin{theorem}[Barrington Transform] 
\label{thm:barrington-transform}
Any circuit  of depth $d$ can be
  transformed   into  a   group   program  of   length  $4^{d}$   that
  $\alpha$-computes the same function as the circuit.
\end{theorem}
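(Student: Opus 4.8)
The plan is to prove the Barrington Transform by structural induction on the circuit, following Barrington's original argument. The key notion is that of a gate (or sub-circuit) being $\alpha$-computed for a chosen cycle $\alpha$: a group program of length $\ell$ $\alpha$-computes $f$ if the product is $\alpha$ when $f=1$ and $1_{S_5}$ when $f=0$. The first observation I would record is a ``change of generator'' fact: if some group program of length $\ell$ $\alpha$-computes $f$, then for any other 5-cycle $\beta$ there is a group program of the \emph{same} length $\ell$ that $\beta$-computes $f$. This follows because any two 5-cycles are conjugate in $S_5$, say $\beta = \tau\alpha\tau^{-1}$, and we simply replace the first group element $g_1^{\,b}$ by $\tau g_1^{\,b}$ and the last $g_\ell^{\,b}$ by $g_\ell^{\,b}\tau^{-1}$ in both the $0$-branch and the $1$-branch; when $f=0$ the product is still $\tau 1_{S_5}\tau^{-1}=1_{S_5}$, and when $f=1$ it becomes $\tau\alpha\tau^{-1}=\beta$.

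Next I would handle the base case and the two inductive steps. Base case: a depth-$0$ circuit is a single literal $x_{k}$ (or its negation). For $x_k$, the length-$1$ program $g_1^0 = 1_{S_5}$, $g_1^1 = \alpha$, $k_1=k$ $\alpha$-computes it, and $1 \le 4^0$ — here I'd allow padding with identity elements so the length is exactly $4^d$, or note the bound is an upper bound. For $\neg x_k$ we do not charge depth (by the paper's nonstandard definition), so it suffices to $\alpha$-compute $\neg x_k$ with a program whose length matches that of $x_k$; use $g_1^0=\alpha, g_1^1=1_{S_5}$. The NOT case in the inductive step is the same: given a program of length $L$ that $\alpha$-computes $f$, swap the two branches $g_i^0 \leftrightarrow g_i^1$ for all $i$ to get a length-$L$ program that $\alpha$-computes $\neg f$. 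The heart of the argument is the AND gate. Suppose $g = g_1 \wedge g_2$ where each $g_j$ has depth at most $d-1$. By induction and the change-of-generator fact, pick 5-cycles $\alpha_1,\alpha_2$ whose commutator $[\alpha_1,\alpha_2] = \alpha_1\alpha_2\alpha_1^{-1}\alpha_2^{-1}$ is again a 5-cycle (this is where $S_5$'s non-solvability is used — such a pair exists, e.g.\ two specific 5-cycles whose commutator is a 5-cycle). Let $P_1$ be a length-$4^{d-1}$ program that $\alpha_1$-computes $g_1$, let $P_1'$ be a length-$4^{d-1}$ program that $\alpha_1^{-1}$-computes $g_1$ (change of generator), and similarly $P_2, P_2'$ for $g_2$ with $\alpha_2, \alpha_2^{-1}$. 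Concatenate $P_1 P_2 P_1' P_2'$ to get a program of length $4\cdot 4^{d-1} = 4^d$. When $g_1 = g_2 = 1$ the product is $\alpha_1\alpha_2\alpha_1^{-1}\alpha_2^{-1} = [\alpha_1,\alpha_2]$, a 5-cycle; if either $g_j=0$ its two contributions $\alpha_j$ and $\alpha_j^{-1}$ (resp.\ $1_{S_5}$ and $1_{S_5}$) collapse and the whole product telescopes to $1_{S_5}$. So the concatenation $[\alpha_1,\alpha_2]$-computes $g_1\wedge g_2$, and one final change of generator brings it to $\alpha$. The OR gate is handled by De Morgan: $g_1 \vee g_2 = \neg(\neg g_1 \wedge \neg g_2)$, and since NOT is free in our depth measure this costs nothing beyond the AND construction.

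The main obstacle — and the only genuinely nonroutine point — is exhibiting the pair of 5-cycles with 5-cycle commutator and propagating the ``which generator'' bookkeeping cleanly through the recursion; the change-of-generator lemma is exactly what makes this bookkeeping painless, since it lets each recursive call deliver a program for \emph{whatever} generator the parent needs while keeping the length invariant $4^{d}$. Everything else is concatenation of sequences plus the telescoping cancellation already familiar from the Blinding Lemma. I would also remark that the construction is uniform and the group elements are computable in logspace, which is what makes it usable in the protocols to follow, though that is not needed for the statement as phrased.
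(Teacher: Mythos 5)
Your overall route is the same as the paper's: a change-of-cycle (conjugation) lemma, a cost-free NOT, AND via the commutator of two $5$-cycles whose commutator is again a $5$-cycle, concatenation giving length $4\cdot 4^{d-1}=4^{d}$, and induction on depth. However, your NOT step is wrong as stated. Swapping the two branches $g_i^0 \leftrightarrow g_i^1$ at every position produces a program whose product on input $x$ equals the original program's product on the complemented input $\overline{x}$; that is, it $\alpha$-computes the function $x \mapsto f(\overline{x})$, not $\neg f(x)$. The two coincide for a single literal (your base case), but not in general: take the length-$4$ commutator program for $f = x_1 \wedge x_2$ with pairs $(1_{S_5},\alpha_1),(1_{S_5},\alpha_2),(1_{S_5},\alpha_1^{-1}),(1_{S_5},\alpha_2^{-1})$; after swapping branches the product is $1_{S_5}$ on inputs $(1,1)$, $(1,0)$, $(0,1)$ and equals the commutator on $(0,0)$, so the swapped program computes $\neg x_1 \wedge \neg x_2$, which differs from $\neg(x_1 \wedge x_2)$ on $(1,0)$ and $(0,1)$. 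Since you reduce OR to AND by De Morgan, and since NOT gates may occur anywhere in the circuit (the depth measure merely ignores them), you need a negation construction valid for arbitrary subfunctions, so this error propagates through the whole induction.

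The fix is exactly what the paper does, and your own change-of-generator lemma makes it one line: given a length-$\ell$ program that $\alpha$-computes $f$, first conjugate it into a length-$\ell$ program that $\alpha^{-1}$-computes $f$, then multiply the last elements $g_\ell^0$ and $g_\ell^1$ on the right by $\alpha$. When $f(x)=1$ the product becomes $\alpha^{-1}\alpha = 1_{S_5}$, and when $f(x)=0$ it becomes $\alpha$, so the modified program $\alpha$-computes $1-f$ with no change in length. With this replacement your argument (base case, commutator AND step with an explicit pair of $5$-cycles, De Morgan for OR, and the $4^{d}$ length bookkeeping) goes through and coincides with the paper's proof.
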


For  the  sake  of  completeness  we present  the  proof  in  Appendix
\ref{appendix:barrington-transform}.   The  proof,  presented  as  a  series  of
lemmas, details  the Barrington Transform  showing how to  transform a
circuit  into a  group program.  This  treatment is  directly based  on
\cite{Viola09}.

It  follows  from   Theorem  \ref{thm:barrington-transform}  that  the
Barrington Transform transforms any $n$-input single output circuit of
depth $\kappa\lg n$ into a group program of length $n^{2\kappa}$ where
$\kappa$ is a  constant.  

We will  be using an extension  of the notion  of randomized encodings
\cite{IK00,  AIK05,  A11}  that  played  a  significant  role  in  the
breakthrough  showing  the  feasibility  of cryptography  in  \NCzero~
\cite{AIK04}.  First, we give the definition of randomized encodings.

\begin{definition}[Randomized encoding] 
A   function  $f(x)$   is   said  to   have   a  randomized   encoding
$\hat{f}(x,r)$,  where $r$  is a  random  string, if  there exist  two
efficiently computable (deterministic, polynomial-time) algorithms REC
and SIM such that

\noindent{\bf  Correctness} $\forall x,  r$, given  $\hat{f}(x,r)$ REC
recovers $f(x)$, i.e. REC$(\hat{f}(x,r)) = f(x)$ .

\noindent{\bf  Security}  $\forall  x$,  given $x$  and  $r'$  (random
coins), SIM produces a distribution identical to $\hat{f}(x,r)$, i.e.,
the distribution of  SIM$(x, r')$ is identical to  the distribution of
$\hat{f}(x,r)$.

\end{definition}

The above definition will make  it easier to comprehend the definition
of a {\em 2-decomposable randomized encoding}. We believe that, though
natural  in the  context of  2-party secure  computation, this  is the
first   time  this   definition   is  appearing   explicitly  in   the
literature. A stronger definition of decomposable randomized encodings
has  appeared   before  \cite{IKOS08,   IKOS09}  and  the   notion  of
2-decomposable  randomized  encodings  is  implicit  in  prior  works,
including \cite{FKN94}.

\begin{definition}[2-decomposable Randomized encoding] 
\label{def:2-decomp}
A  function  $f(x,y)$ is  said  to  have  a 2-decomposable  randomized
encoding $\langle\hat{f}_1(x,r),  \hat{f}_2(y,r)\rangle$, where $r$ is
a  ({\em  shared})  random  string,  if there  exist  two  efficiently
computable  (deterministic, polynomial-time)  algorithms  REC and  SIM
such that

\noindent{\bf    Correctness}    $\forall     x,    y,    r$,    given
$\langle\hat{f}_1(x,r), \hat{f}_2(y,r)\rangle$  REC recovers $f(x,y)$,
i.e., REC$(\langle\hat{f}_1(x,r), \hat{f}_2(y,r)\rangle) = f(x,y)$.

\noindent{\bf Security} $\forall x, y$,  given $x, y$ and $r'$ (random
coins),     SIM    produces     a     distribution    identical     to
$\langle\hat{f}_1(x,r), \hat{f}_2(y,r)\rangle$, i.e., the distribution
of   SIM$(x,   y,  r')$   is   identical   to   the  distribution   of
$\langle\hat{f}_1(x,r), \hat{f}_2(y,r)\rangle$.
\end{definition}

The schemes  presented in this paper  rely crucially on  the model and
construction presented  in \cite{FKN94}. We  define the FKN  model and
show how protocols for it are essentially equivalent to 2-decomposable
randomized encodings.

\begin{definition}[FKN model]
\label{def:FKNModel}
The 3  parties in the FKN  model and their states  of knowledge are
captured  in Figure  \ref{fig:FKNModel}. There  is a  single  round of
communication  where  Alice  and  Bob  each  send  a  private  message
($E_r(x)$ and  $E_r(y)$, respectively)  to Carol who  is then  able to
efficiently compute the publicly known function $f(x,y)$ such that

\noindent{\bf  Correctness}  $\forall  x,   y$,  and  $r$,  given  the
encrypted messages $E_r(x),  E_r(y)$ Carol computes $f(x,y)$ correctly
all the time.

\noindent{\bf Security}  $\forall x, y$, and $r$,  given the encrypted
messages $E_r(x), E_r(y)$ Carol learns nothing whatsoever about $x, y$
(other than the value of $f(x,y)$).
\end{definition}

\begin{figure}
\centering
\includegraphics[width=3.4in, height=3in]{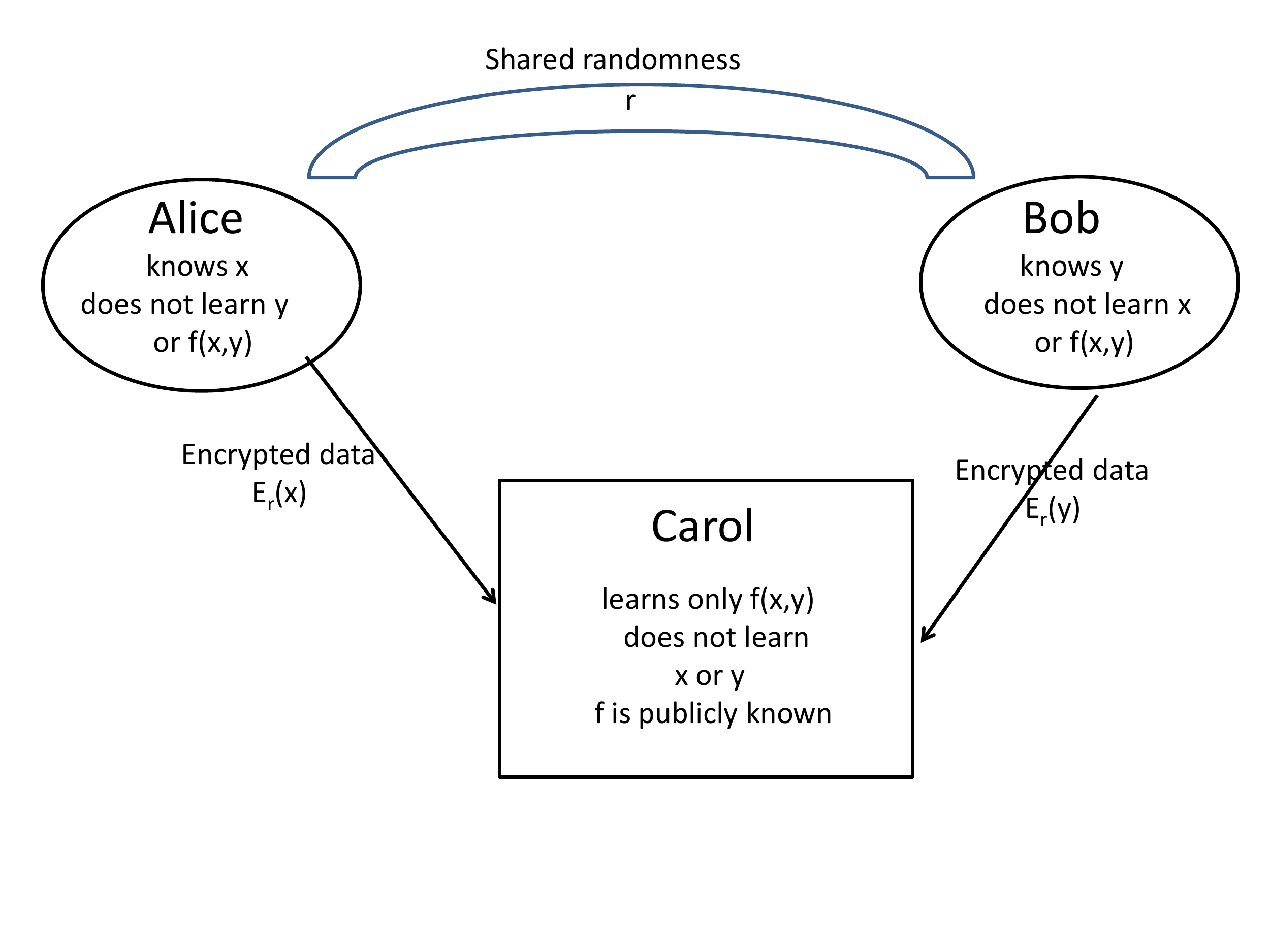}
\vspace{-.5in}
\caption{FKN model}
\label{fig:FKNModel}
\end{figure}

\begin{lemma}[Equivalence   of  FKN   and   2-decomposable  randomized
  encoding]
\label{lemma:FKN2decomp}
There is  a 1-1  isomorphism between protocols  for the FKN  model and
2-decomposable randomized encodings.
\end{lemma}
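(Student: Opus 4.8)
The plan is to exhibit the isomorphism explicitly in both directions and check that the defining properties transfer. First I would unwind notation: in the FKN model, Alice holds $x$, Bob holds $y$, they share $r$, and each sends a single message to Carol who computes the public function $f(x,y)$. The message Alice can send is, given the shared randomness, a deterministic function of $x$ and $r$ — call it $\hat{f}_1(x,r)$ — and likewise Bob's message is $\hat{f}_2(y,r)$. So the pair (Alice's message-generation algorithm, Bob's message-generation algorithm, Carol's reconstruction algorithm) is literally a candidate 2-decomposable randomized encoding $\langle \hat{f}_1(x,r), \hat{f}_2(y,r)\rangle$ with REC taken to be Carol's algorithm. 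Conversely, given a 2-decomposable randomized encoding of $f$ with algorithms REC and SIM, define the FKN protocol in which Alice sends $\hat{f}_1(x,r)$, Bob sends $\hat{f}_2(y,r)$, and Carol runs REC. This pins down the map on objects; the content of the lemma is that the map is well-defined (sends valid protocols to valid encodings and back) and bijective.

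The second step is to verify that the two defining conditions match up on the nose. The Correctness condition of the FKN protocol — $\forall x,y,r$, Carol recovers $f(x,y)$ from $E_r(x), E_r(y)$ — is exactly the Correctness condition of the randomized encoding, $\mathrm{REC}(\langle\hat{f}_1(x,r),\hat{f}_2(y,r)\rangle)=f(x,y)$, once we identify $E_r(x)=\hat{f}_1(x,r)$ and $E_r(y)=\hat{f}_2(y,r)$ and $\mathrm{REC}$ with Carol's computation. For Security, the FKN requirement is that Carol learns nothing about $x,y$ beyond $f(x,y)$; the standard way to formalize ``learns nothing beyond $f(x,y)$'' in the information-theoretic setting is precisely the existence of a simulator SIM that, on input $x,y$ and fresh coins $r'$, reproduces the exact distribution of the message pair $\langle\hat{f}_1(x,r),\hat{f}_2(y,r)\rangle$ — which is the Security clause of Definition \ref{def:2-decomp}. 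I would state this correspondence as the operative definition of FKN-security (the paper's Definition \ref{def:FKNModel} is already phrased in a way that invites this reading), so that the two security notions are definitionally the same statement. Efficiency also transfers directly, since in both frameworks the message-generating parties are deterministic polynomial-time given the randomness and the recoverer/Carol is efficient.

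The last step is to confirm the two maps are mutually inverse, establishing the claimed 1-1 isomorphism: starting from a protocol, extracting $(\hat{f}_1,\hat{f}_2,\mathrm{REC})$, and re-building a protocol gives back Alice sending $\hat{f}_1(x,r)$, Bob sending $\hat{f}_2(y,r)$, Carol running $\mathrm{REC}$ — syntactically the original protocol; and the reverse composition is likewise the identity. The main obstacle, such as it is, is not technical difficulty but a modeling subtlety: one must be careful that ``single message, deterministic given shared randomness'' is genuinely what the FKN model stipulates (so that Alice's message really is a function of $x$ and $r$ alone, with no dependence on $y$, and symmetrically for Bob), and that the FKN notion of ``learns nothing else'' is taken in the simulation-based sense rather than, say, a weaker entropy-based sense — otherwise the correspondence of the Security clauses, while morally immediate, would require an extra argument. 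Once the definitions are aligned as above, the proof is essentially a matter of reading one definition off the other.
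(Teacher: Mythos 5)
Your proposal matches the paper's proof: both directions are handled by the same identifications ($E_r(x)=\hat{f}_1(x,r)$, $E_r(y)=\hat{f}_2(y,r)$, Carol's computation $=$ REC), with Correctness and Security read off definitionally in each direction. Your extra remarks on mutual inverses and on interpreting FKN security in the simulation-based sense only make explicit what the paper leaves implicit.
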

\begin{proof}
  We will first  see that a 2-decomposable randomized  encoding of the
  function $f(x,y)$  gives rise to a  protocol for the  FKN model. Let
  Alice compute and send  $\hat{f}_1(x,r)$ to Carol while Bob computes
  and send $\hat{f}_2(y,r)$ to Carol. From the Correctness property it
  follows    that    Carol    can   run    REC$(\langle\hat{f}_1(x,r),
  \hat{f}_2(y,r)\rangle)$  to  compute   $f(x,y)$  correctly  all  the
  time.  And from  the  Security  property we  get  that Carol  learns
  nothing but the value $f(x,y)$.

  Similarly,  given a  protocol for  the FKN  model that  computes the
  publicly  known  function  $f(x,y)$  it  is easy  to  see  that  the
  Correctness and Security properties  carry over by setting $E_r(x) =
  \hat{f}_1(x,r)$ and $E_r(y) = \hat{f}_2(x,r)$.
\end{proof}

We present  the definition of  a Universal Function below.  

\begin{definition}[Universal Function]
A Universal Function $\mathcal{UF}$ has  two inputs $x$ and  $y$ where
$y$ is  the encoding of a function  for which $x$ is  a suitable input.
The output  of $\mathcal{UF}(x,y)$ is  defined to be $y(x)$.
\end{definition}

One  can similarly  define  Universal Circuits  which  are similar  to
Universal Functions  except that they  work with encoding  of circuits
and simulate circuits.
\begin{definition}[Universal Circuit]
A Universal Circuit $\mathcal{UC}$ has  two inputs $x$ and  $y$ where
$y$ is  the encoding of a circuit  for which $x$ is  a suitable input.
The output  of $\mathcal{UC}(x,y)$ is  defined to be $y(x)$,  in other
words  the   Universal  Circuit  outputs  the   result  obtained  from
simulating  function $y$  on  input $x$.  (Universal  Circuits are  the
circuit equivalent of Universal Turing Machines).
\end{definition}

The central idea  of this paper is to bypass  Universal Circuits so we
will not  be utilizing this  definition except to explain  (in Section
\ref{sec:overview})  how it  is that  we pick  up efficiency  gains by
avoiding them.

One  can naturally  apply the  notion of  a  2-decomposable randomized
encoding to a universal function $f(x,y) = y(x)$.

And   analogous    to   the   correspondence    expressed   in   Lemma
\ref{lemma:FKN2decomp} we have the following correspondence.

\begin{lemma}[Equivalence  of  C-CBPS  and  2-decomposable  randomized
  encoding of universal functions]
\label{lemma:ccbps-2decompuniv}
There is a 1-1 isomorphism  between protocols for the C-CBPS model and
2-decomposable randomized encodings of universal functions.
\end{lemma}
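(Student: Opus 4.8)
The plan is to reduce this statement to Lemma~\ref{lemma:FKN2decomp} by recognising that the C-CBPS model is nothing but the FKN model of Definition~\ref{def:FKNModel} instantiated with a universal function. Concretely, I would fix the dictionary: identify the publisher with Alice and its metadata $\bbm$ with the input $x$; identify the subscriber with Bob and its predicate (more precisely, the \emph{encoding} $\bbP$ of a predicate) with the input $y$; and identify the broker with Carol. Under this dictionary the publicly known function Carol must compute is exactly $\mathcal{UF}(\bbm,\bbP) = \bbP(\bbm)$, and the Correctness and Security clauses in the definition of C-CBPS become, word for word, the Correctness and Security clauses in Definition~\ref{def:FKNModel} with $f = \mathcal{UF}$, $x = \bbm$, $y = \bbP$. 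Hence a protocol for C-CBPS is precisely a protocol for the FKN model whose target function is a universal function.

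With that identification in hand I would invoke Lemma~\ref{lemma:FKN2decomp}, which already supplies a 1-1 isomorphism between FKN protocols for a fixed public $f$ and 2-decomposable randomized encodings of $f$; letting $f$ range over the universal functions $\mathcal{UF}$ yields the claimed isomorphism. For completeness I would also exhibit the two directions explicitly, mirroring the proof of Lemma~\ref{lemma:FKN2decomp}. Forward: given a 2-decomposable randomized encoding $\langle \hat{f}_1(x,r), \hat{f}_2(y,r)\rangle$ of $f(x,y) = \mathcal{UF}(x,y)$, let the publisher send $E_r(\bbm) = \hat{f}_1(\bbm,r)$ and the subscriber send $E_r(\bbP) = \hat{f}_2(\bbP,r)$ to the broker; Correctness of the encoding lets the broker run REC to obtain $\bbP(\bbm)$ all the time, and Security of the encoding (via SIM) guarantees the broker learns nothing else. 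Backward: given a C-CBPS protocol, set $\hat{f}_1(\bbm,r) = E_r(\bbm)$ and $\hat{f}_2(\bbP,r) = E_r(\bbP)$, take REC to be the broker's decision procedure and SIM to be the simulator furnished by the protocol's Security clause; Correctness and Security of the encoding are then immediate.

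Finally I would check that the correspondence is an \emph{isomorphism}, i.e. a genuine bijection: the two maps above are mutually inverse, since each merely renames the messages $E_r(\cdot)$ as the encoding components $\hat{f}_i(\cdot,r)$ (and the broker's procedure and simulator as REC and SIM) and back, leaving the underlying objects untouched, so distinct protocols produce distinct encodings and conversely. The only point requiring care — and the main (if modest) obstacle — is the domain restriction built into the notion of a universal function: $y$ must be the encoding of a function for which $x$ is a legal input, so $\mathcal{UF}$ is only partially defined, and one must impose the same restriction on the pair $(\bbm,\bbP)$ consistently on the C-CBPS side, on the FKN side, and in the Correctness/Security quantifiers of the randomized encoding, so that the universally quantified ``$\forall x,y$'' uniformly means ``for all valid $(x,y)$''. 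Once that convention is pinned down, the equivalence transfers verbatim from Lemma~\ref{lemma:FKN2decomp}.
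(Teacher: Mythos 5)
Your proposal is correct and follows essentially the same route as the paper: identify the publisher/subscriber/broker with Alice/Bob/Carol, let $x,y$ correspond to $\bbm,\bbP$ and $\hat{f}_1(x,r),\hat{f}_2(y,r)$ to $E_r(\bbm),E_r(\bbP)$, and transfer Correctness and Security directly via Lemma~\ref{lemma:FKN2decomp}. Your extra care about the partial domain of the universal function is a reasonable refinement, but it does not change the argument, which matches the paper's proof.
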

\begin{proof}
The proof is very similar  to that of Lemma \ref{lemma:FKN2decomp} and
so  we provide  only  the key  aspects  of the  correspondence as  the
details  are  straightforward.   $x,  y$ correspond  to  $\bbm,  \bbP$
respectively.     $\hat{f}_1(x,r),   \hat{f}_2(y,r)$    correspond   to
$E_r(\bbm), E_r(\bbP)$ respectively.  And the Correctness and Security
properties carry over in a direct way.
\end{proof} 

\begin{figure}
\centering
\includegraphics[width=3.4in, height=3in]{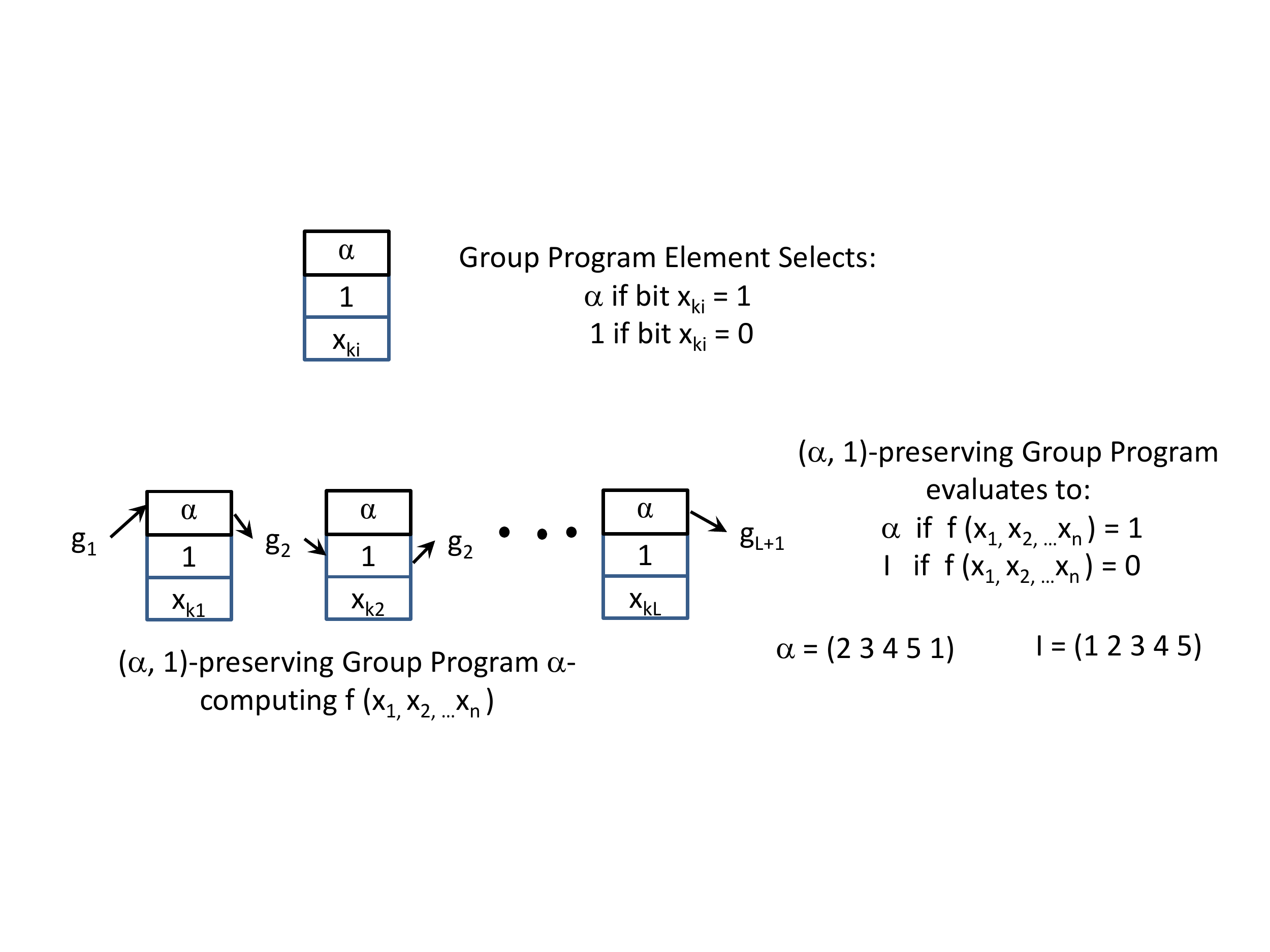}
\vspace{-.8in}
\caption{The form of a $(\alpha, 1)$-preserving Group Program}
\label{fig:alphaone}
\end{figure}

We now  present some definitions that clarify  the ``fixed structure''
trick  alluded  to  earlier,  and  how it  relates  to  2-decomposable
randomized encodings of universal functions.  The following definition
of a $(\alpha,1)$-preserving group program is presented in visual form
in Figure \ref{fig:alphaone}.

\begin{definition}[$(\alpha,1)$-preserving group programs]
  An  $(\alpha,1)$-preserving group  program of  length $L$  is $(g_1,
  g_2,\ldots, g_{L+1})$,  $(k_1, k_2,\ldots, k_L)$ where  for any $i$:
  $g_{i} \in S_{5}$ and $k_{i}  \in \{1,\ldots,n\}$.  We say that this
  $(\alpha,1)$-preserving      group     program     $\alpha$-computes
  $f:\{0,1\}^{n} \rightarrow \{0,1\}$ if $\forall x$,
  \[      f(x)=1      \Rightarrow      (\prod      ^{\ell}      _{i=1}
  g_{i}\cdot\alpha^{x_{k_i}})g_{L+1} = \alpha \]
\[       f(x)=0      \Rightarrow      (\prod       ^{\ell}      _{i=1}
g_{i}\cdot\alpha^{x_{k_i}})g_{L+1}  =  1_{S_5}; \]  which  we can  write
compactly    as     $\forall    x    :     (\prod    ^{\ell}    _{i=1}
g_{i}\cdot\alpha^{x_{k_i}})g_{L+1} = \alpha^{f(x)}$.
\end{definition}

\begin{definition}[Index sequence  of an $(\alpha,1)$-preserving group
  program]
  Given  $(g_1, g_2,\ldots,  g_{L+1})$, $(k_1,  k_2,\ldots,  k_L)$, an
  $(\alpha,1)$-preserving  group  program  of  length  $L$,  its  index
  sequence is $(k_1, k_2,\ldots, k_L)$.
\end{definition}

Analogous   to   Theorem   \ref{thm:barrington-transform} we have

\begin{theorem}[$(\alpha,1)$-preserving Barrington Transform] 
\label{thm:alpha-one-barrington-transform}
Any   circuit    of   depth   $d$   can   be    transformed   into   a
$(\alpha,1)$-preserving   group  program   of   length  $4^{d}$   that
$\alpha$-computes the same function as the circuit.
\end{theorem}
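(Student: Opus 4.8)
The plan is to obtain the $(\alpha,1)$-preserving program by \emph{post-processing} the ordinary group program produced by Theorem~\ref{thm:barrington-transform}, rather than by re-running Barrington's recursion in the $(\alpha,1)$-preserving world (the latter is awkward because conjugating an $(\alpha,1)$-preserving program by a fixed $c$ does \emph{not} simply conjugate its value --- the interleaved factors $\alpha^{x_{k_i}}$ are left untouched --- so one cannot re-target to the auxiliary $5$-cycles that the $\wedge$ step needs). The post-processing rests on one algebraic observation. Consider a single step $(g_i^0,g_i^1)$ of an ordinary group program, read at position $k_i$. If the ``difference'' $\delta_i := g_i^1(g_i^0)^{-1}$ is a $5$-cycle, then, because $\alpha$ is itself a $5$-cycle and all $5$-cycles are conjugate in $S_5$, there is a $p_i\in S_5$ with $p_i\alpha p_i^{-1}=\delta_i$; setting $q_i:=p_i^{-1}g_i^0$, a one-line check gives $p_i\alpha^{b}q_i=g_i^{b}$ for both $b=0$ and $b=1$. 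Thus the ordinary step is reproduced \emph{exactly} by the fixed-shape gadget $p_i\cdot\alpha^{x_{k_i}}\cdot q_i$.

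Given this, I would rewrite every step this way and merge each adjacent pair $q_{i-1}p_i$ into a single structural element, i.e.\ set $h_1:=p_1$, $h_j:=q_{j-1}p_j$ for $2\le j\le L$, and $h_{L+1}:=q_L$. Telescoping,
\[
\Bigl(\prod_{i=1}^{L}h_i\cdot\alpha^{x_{k_i}}\Bigr)h_{L+1}
=\prod_{i=1}^{L}\bigl(p_i\alpha^{x_{k_i}}q_i\bigr)
=\prod_{i=1}^{L}g_i^{x_{k_i}}
=\alpha^{f(x)},
\]
so $(h_1,\dots,h_{L+1}),(k_1,\dots,k_L)$ is an $(\alpha,1)$-preserving group program of the same length $L$ and with the same index sequence that $\alpha$-computes $f$; since Theorem~\ref{thm:barrington-transform} gives $L=4^{d}$, we are done. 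The one thing this needs, and which the black-box statement of Theorem~\ref{thm:barrington-transform} does not literally provide, is the invariant that \emph{every} step satisfies ``$\delta_i$ is a $5$-cycle.'' So the real work is to revisit the construction behind that theorem (Appendix~\ref{appendix:barrington-transform}) and verify the invariant can be maintained: the literal base case has $(g_1^0,g_1^1)=(1_{S_5},\alpha)$ with $\delta_1=\alpha$; conjugating every element of a subprogram by a fixed $c$ (used both to re-target between $5$-cycles and to realize inverse targets) replaces $\delta_i$ by $c\delta_i c^{-1}$, still a $5$-cycle; and concatenating subprograms (used for $\wedge$ via a commutator identity $[a,b]=\alpha$ on $5$-cycles, and hence for $\vee$ via De~Morgan) merely unions the step lists.

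The delicate case --- which I expect to be the main obstacle --- is the $\neg$ gate. The usual move, of $\alpha^{-1}$-computing the subfunction and then \emph{prepending a fresh constant step} that multiplies by $\alpha$, would create a step with $g^0=g^1=\alpha$ and hence $\delta=1_{S_5}$, which is not a $5$-cycle and would spoil the rewriting above. The fix is to fold that constant $\alpha$ into the first step of the conjugated subprogram instead of adding a step: this leaves the step count unchanged --- consistent with $\neg$ gates not counting toward depth, so the final length stays exactly $4^{d}$ --- and it only left-multiplies one step's pair by $\alpha$, replacing its $\delta$ by $\alpha\delta\alpha^{-1}$, again a $5$-cycle. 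Checking that this modified $\neg$ still $\alpha$-computes the negation, and that the three operations compose so that a depth-$d$ circuit yields length exactly $4^{d}$ with the invariant intact, is the crux of the proof. (Edge case: a \emph{constant} predicate cannot be $\alpha$-computed by an $(\alpha,1)$-preserving program of length $1$, since the two defining constraints force $\alpha=1_{S_5}$; the statement is therefore read for non-constant circuits, and a non-constant depth-$0$ circuit is a literal, covered by the base case.)
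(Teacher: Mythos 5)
Your proof is correct, but it takes a genuinely different route from the paper. The paper (Appendix~\ref{appendix:alpha-one-barrington}) simply reruns Barrington's recursion directly in the $(\alpha,1)$-preserving world: re-targeting between cycles is done by multiplying only the first structural element on the left by $\rho_{\alpha\beta}^{-1}$ and the last one, $g_{L+1}$, on the right by $\rho_{\alpha\beta}$; negation multiplies the last structural element by $\alpha$; and $\wedge$ concatenates four subprograms. In particular, your stated reason for avoiding that route is a misconception: re-targeting an $(\alpha,1)$-preserving program does conjugate its value, because the conjugators sit at the two ends of the product while the interleaved $\alpha^{x_{k_i}}$ factors are untouched in between --- so the commutator trick for $\wedge$ goes through verbatim, and no new invariant is ever needed. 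Your alternative is to post-process the ordinary program of Theorem~\ref{thm:barrington-transform} via the factorization $g_i^{b}=p_i\alpha^{b}q_i$ (valid exactly when $\delta_i=g_i^{1}(g_i^{0})^{-1}$ is a $5$-cycle, by conjugacy of $5$-cycles in $S_5$) and then telescope $q_{i-1}p_i$ into the structural elements; the algebra there is right, and your audit of the invariant through the base case, re-targeting, concatenation, and a ``fold the constant into an existing step'' implementation of $\neg$ is accurate (indeed the paper's own $\neg$ already folds $\alpha$ into the last step rather than appending one, so the case you flag as delicate never arises in its construction). What your route buys is a reusable reduction: any group program whose steps satisfy the $5$-cycle-difference invariant converts, with the same length and index sequence, into an $(\alpha,1)$-preserving one --- though note the invariant is a property of Barrington's specific construction, not of group programs in general (padding steps with $g^{0}=g^{1}$ violate it), so the dependence on the appendix construction cannot be made fully black-box. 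What the paper's route buys is brevity: it needs no invariant at all, only the observation that each of Barrington's three operations can be performed while touching only end elements, and your constant-predicate caveat applies equally (and harmlessly) to both arguments since leaves are literals.
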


For  the  sake  of  completeness  we present  the  proof  in  Appendix
\ref{appendix:alpha-one-barrington}. 

The  definition of  a  fixed structure group  program  given below  is
crucial to  improving the  efficiency of the  ideas presented  in this
paper and making them practical.

\begin{definition}[Fixed structure group program]
\label{def:fixed-structure}
  If a class  of functions (say, the class  of $n$-input single output
  functions computable by circuits of  depth $\kappa\lg n$) can all be
  transformed  into $(\alpha,1)$-preserving  group  programs with  the
  exact same index sequence (and hence the exact same length) then the
  resulting  class  of  group  programs   is  said  to  have  a  fixed
  structure.  In a  slight abuse  of language  we refer  to  the class
  itself as a fixed structure group program.
\end{definition}

\section{Overview}
\label{sec:overview}
Before explaining the bottleneck that this paper has overcome we first
briefly  sketch   the  protocol  for   the  FKN  model   presented  in
\cite{FKN94}  for  publicly  known  functions $f(x,y)$  computable  as
logarithmic-depth  circuits. The  group program  equivalent of  $f$ is
computed (using  Barrington's Transform \cite{B89}) by  both Alice and
Bob from  the corresponding circuit.  Each of  them instantiates their
share of  the group program based  on their respective  input ($x$ for
Alice  and $y$  for Bob).   Then each  of them  blinds their  share in
coordinated  fashion using  the shared  randomness.  Finally,  each of
Alice and Bob sends their  respective blinded shares to Carol who puts
them  together to  form the  final  blinded sequence  whose value  she
computes   to  obtain   $f(x,y)$.   See  Figure~\ref{fig:FKN}.    What
\cite{FKN94}  essentially   demonstrate  is  the   construction  of  a
2-decomposable randomized encoding $\hat{f}$  of $f$, as elaborated in
Lemma \ref{lemma:FKN2decomp}.  Of course, the notion of 2-decomposable
randomized  encodings  arose much  later  in  the  work of  Ishai  and
Kushilevitz \cite{IK00, IKOS08, IKOS09},  but it gives us a convenient
language  to  think  about  such protocols.   The  individual  shares,
constructed and, sent to Carol by Alice and Bob are just the two parts
of the 2-decomposable randomized encoding, namely $\hat{f}_1(x,r)$ and
$\hat{f}_2(y,r)$.  REC guarantees that Carol is able to learn $f(x,y)$
while SIM guarantees that he learns nothing beyond that.

\begin{figure}
\centering
\includegraphics[width=3.4in, height=3in]{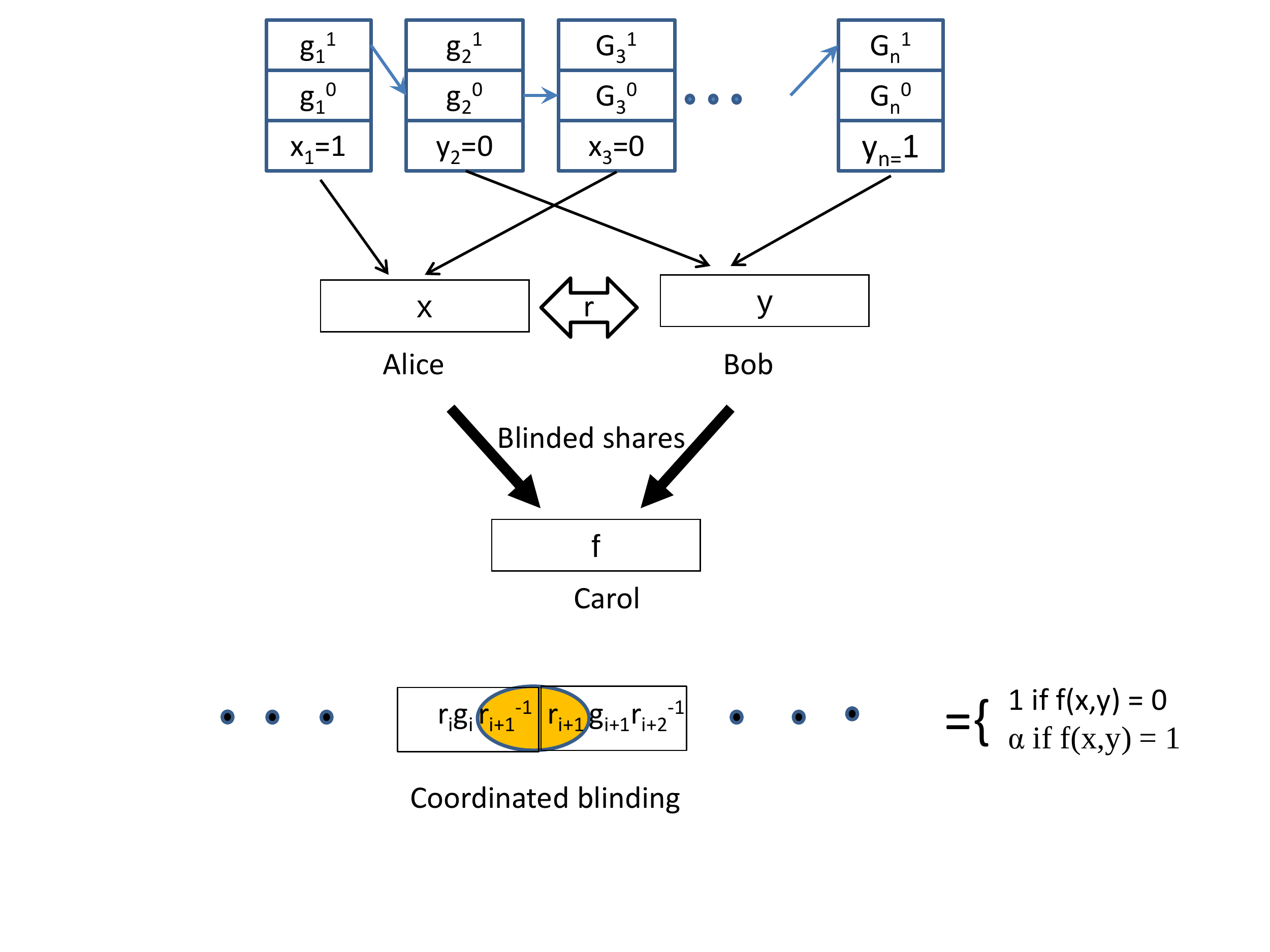}
\vspace{-.5in}
\caption{Protocol for FKN model, see \cite{FKN94}.}
\label{fig:FKN}
\end{figure}

Now, we explain the \emph{universal circuits} bottleneck that we claim
to have overcome. Recall that in  our (C-CBPS) setting $f$ is not just
any function  but it is a  \emph {universal function}  where $f(x,y) =
y(x)$.   The term  \emph{universal} comes  from the  fact that  $f$ is
effectively simulating $y$ on $x$ and so the natural question arises -
how efficiently  can this simulation  be done?  In other  words, given
that we  are restricted to  have only polynomial-size  group programs,
what restriction does  this put on the class  of functions represented
by $y$?  The  best known construction of universal  circuits is due to
Valiant \cite{V76}  who shows that if  $y$ were a circuit  of size $s$
and depth  $d$ then it must  satisfy $d \log  s = O(\log n)$  (for the
resulting  Barrington Transform to  produce a  polynomial-length group
program).  Observe  that this constraint on $d$  and $s$ automatically
prevents $y$ from  representing circuits in \NCone because  $s \geq d$
and hence  $d$ is forced to  be $o(\log n)$.  Sanders,  Young and Yung
\cite{SYY99}  who utilize  Valiant's universal  circuits construction,
mention  that $y$  could  be  the class  of  functions represented  by
circuits of depth $\sqrt{\log n}$  and size $2^{\sqrt{\log n}}$ - note
that this is believed to be a strict subset of the class of predicates
in \NCone.

Our main  contribution is in showing  how we can  bypass the universal
circuits  bottleneck and  instead use  group programs  to  improve the
efficiency  of simulation.   We present  two main  constructions.  The
first, UGP-Match,  which is  primarily of theoretical  interest, shows
how we can handle all of \NCone by encoding the subscriber's predicate
as a group program (using the Barrington Transform) and constructing a
universal group  program using the Barrington  Transform (again). This
construction  has a  conceptually clean  proof but  it  is impractical
since the double  invocation of the Barrington Transform  results in a
final group program  whose length is a very  high-degree polynomial in
$n$.   Our  second  construction,  FSGP-Match,  which  is  potentially
practical, uses a fixed structure trick to avoid one invocation of the
Barrington Transform.  Rather than construct a universal group program
using the Barrington Transform we directly construct a fixed structure
group program, which in essence  is a universal group program but much
more  economical length-wise. By  avoiding the  inefficiencies arising
from use of  the Barrington Transform we obtain  a final group program
whose length  is a relatively  low-degree polynomial in $n$.   In this
second construction  of FSGP-Match we  need a special group  program -
the \emph{selector group program} - which we construct by applying the
Barrington Transform to a \emph{selector circuit}. We also show how an
additional  optimization  can be  achieved  by  hand-crafting the  the
selector  group  program to  obtain  our  most efficient  construction
OFSGP-Match.

\section{Protocols and proofs}
\label{sec:protocols}
\subsection{UGP-Match}
As  explained  in Section  \ref{sec:overview}  the  main  idea in  the
UGP-Match construction  is to use  the Barrington Transform  to encode
the predicate $\bbP$  (which is representable as a  circuit in \NCone)
as a  polynomial-length group program  that indexes into  the metadata
bits $\bbm$.  The (publicly known) circuit $f$ in the protocol for the
FKN model \cite{FKN94} (see Figure  \ref{fig:FKN}) is chosen to be such
that  at the  lowest  level  it first  selects  the appropriate  group
program element based on the  index and value of the corresponding bit
in $\bbm$ and  having selected all the group  program elements it then
multiplies  them  together using  a  standard  divide  and conquer  or
balanced     binary     tree     based    approach,     see     Figure
\ref{fig:select-multiply}. In  a nutshell, UGP-Match uses  the protocol from
\cite{FKN94} for the FKN model, with $f$ being a circuit that takes as
input the metadata $\bbm$ and  the predicate $\bbP$ encoded as a group
program  and outputs  the result  of simulating  $\bbP$ on  $\bbm$. We
present  UGP-Match  more  formally  below.  We refer  to  the  circuit
representing $f$ as the UGP-Match Circuit. We refer to the final group
program  that the  broker  assembles as  the  Universal Group  Program
because it  is essentially a  Group Program that simulates  the group
program representing $\bbP$.

\begin{framed}
\underline{UGP-Match}
\begin{enumerate}
\item  The  publisher and  subscriber  register  with  the broker  the
  precise  form of  their inputs.   In particular  the  publisher must
  specify the number $n$ of bits of $\bbm$ the meta-data (if there are
  fewer relevant  bits, the  remaining bits can  be padded  with dummy
  bits).  And the subscriber specifies $L = n^{2\kappa}$ the length of
  the group  program (in terms of  number of group  element pairs (not
  bits) along with  the bit of $\bbm$ that each  group element pair is
  dependent on).  As before the group program can be padded with dummy
  pairs  if  there are  fewer  relevant  pairs.   We may  assume  that
  everybody is coordinated on the  choice of the non-solvable group in
  which to  carry out their  computations, say $S_5$,  the (symmetric)
  group of permutations on $5$  elements, which itself is a group with
  $120$. For the purpose of this specific cprotocol we can assume each
  element  is specified  in unary  using $5$  bits so  that  any given
  element of the group is specified using $5 \times 5 = 25$ bits.

\item  The broker  now computes  the UGP-Match  Circuit $f$  (which is
  essentially a Select block followed by a divide-and-conquer Multiply
  block,  see Figure  \ref{fig:select-multiply}. It  then  applies the
  Barrington  Transform  to create  the  corresponding group  program.
  Each group element pair in this group program is dependent either on
  a subscriber bit  or on a publisher bit.  The  broker hands back the
  entire group program to both the publisher and the subscriber.

\item The publisher and subscriber  know which pairs belong to each of
  them.     They   have    already   coordinated    their   pre-shared
  randomness. Depending on the value of their individual bit they pick
  the   corresponding  element  of   the  pair   and  then   blind  it
  appropriately. They  then give  their respected blinded  elements to
  the broker.

\item The broker  puts all the blinded elements  together in the right
  sequence and multiplies them.  If  he gets $1$ he forwards along the
  (encrypted)  data from  the  publisher to  the  subscriber, else  he
  withholds it.
\end{enumerate}
\end{framed}

\begin{figure}
\centering
\includegraphics[width=4in, height=3in]{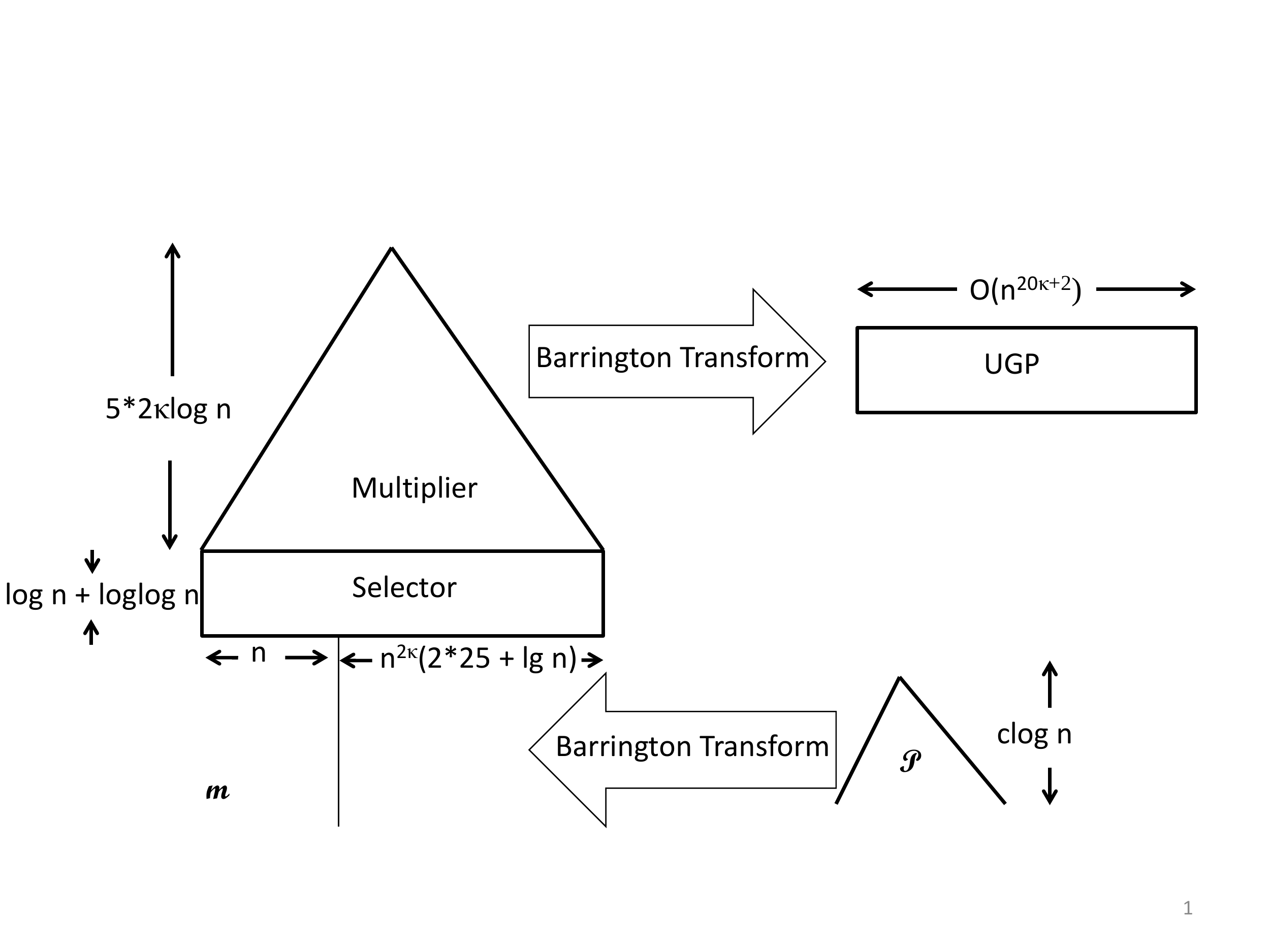}
\vspace{-.4in}
\caption{UGP-Match Circuit}
\label{fig:select-multiply}
\end{figure}

\begin{theorem}
  Given metadata  of size at most  $n$ and predicate of  depth at most
  $\kappa\lg  n$,  UGP-Match  is an  information-theoretically  secure
  protocol     for     the     C-CBPS    model     with     complexity
  $\tilde{O}(n^{20\kappa+2}$.
\end{theorem}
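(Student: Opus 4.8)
The plan is to verify the three claims of the theorem — correctness, information-theoretic security, and the complexity bound — by reducing each to results already established in the excerpt. The protocol is, by construction, an instance of the FKN-model protocol of \cite{FKN94} applied to a specific public function $f$, so the first two properties will follow from Lemma \ref{lemma:FKN2decomp} (equivalence of FKN protocols and 2-decomposable randomized encodings) together with the Blinding Lemma (Lemma \ref{lemma:blindinglemma}); the third is a matter of counting the length $L$ of the final group program that the broker assembles.

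First I would pin down the public function $f(\bbm,\bbP)$ computed in the protocol and argue it is in \NCone. By Theorem \ref{thm:barrington-transform}, the predicate $\bbP$ of depth $\kappa\lg n$ becomes a group program of length $n^{2\kappa}$ indexing into the bits of $\bbm$. The UGP-Match circuit $f$ then (i) for each of the $n^{2\kappa}$ positions, \emph{selects} the appropriate $S_5$ element from the registered pair according to the indexed bit of $\bbm$ — a constant-depth selection per position, computing $O(n^{2\kappa})$ elements each on $25$ bits in parallel — and (ii) \emph{multiplies} these $n^{2\kappa}$ group elements together via a balanced binary tree (Figure \ref{fig:select-multiply}), which has depth $O(\lg n^{2\kappa}) = O(\kappa\lg n)$ with each $S_5$-multiplication a constant-size sub-circuit. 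Hence $f$ has depth $O(\kappa\lg n)$ and is an \NCone\ circuit; call its depth $d_f$. Correctness is then immediate: the group program produced by the Barrington Transform of $f$ (Theorem \ref{thm:barrington-transform}) $\alpha$-computes $f(\bbm,\bbP) = \bbP(\bbm)$, and by the Blinding Lemma blinding preserves the value of the assembled sequence, so the broker recovers $\alpha^{\bbP(\bbm)}$ and thus $\bbP(\bbm)$ exactly, for all inputs and all shared randomness.

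Next I would argue security. This is where the content lies, and it is exactly the FKN argument: the publisher's message is the subsequence of blinded group elements at positions indexed by $\bbm$-bits, the subscriber's message is the subsequence at positions indexed by $\bbP$-bits, and together these form a uniformly random sequence in $G^{L}$ conditioned on having value $\alpha^{\bbP(\bbm)}$ — this is precisely the ``uniform distribution'' conclusion of Lemma \ref{lemma:blindinglemma}. Therefore a simulator SIM, given only $\bbP(\bbm)$, can sample a uniformly random length-$L$ sequence with value $\alpha^{\bbP(\bbm)}$ and split it into the two shares, producing a distribution identical to the real transcript; the broker learns nothing beyond $\bbP(\bbm)$. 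Formally this says the construction is a 2-decomposable randomized encoding of the universal function $\bbP(\bbm)$, and the conclusion follows from Lemma \ref{lemma:ccbps-2decompuniv}. One subtlety to handle carefully: the \emph{structure} of the assembled sequence — which positions are controlled by which party, and the fixed elements $g_i^j$ not yet instantiated — is public (Kerckhoffs's principle, and the registration step), so the simulator knows it; only the instantiated-and-blinded values must be simulated, and those are covered by the Blinding Lemma.

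Finally, the complexity. The group program for $\bbP$ has length $n^{2\kappa}$; the select-and-multiply circuit $f$ operating on it has size polynomial in $n$ (dominated by $O(n^{2\kappa})$ selectors of constant size on $25$-bit elements plus a multiply-tree of $O(n^{2\kappa})$ constant-size $S_5$-multipliers) and depth $d_f = c\lg n^{2\kappa}$ plus lower-order terms for some explicit constant $c$; applying the Barrington Transform to $f$ yields a group program of length $4^{d_f} = (n^{2\kappa})^{\log_2 4 \cdot c} = n^{\Theta(\kappa)}$, and tracking the constants through the depth of the balanced multiplication tree and the selector layer gives the stated exponent, i.e.\ $L = \tilde O(n^{20\kappa+2})$ (the $\lg n$ factor absorbed in the $\tilde O$ coming from the $O(\lg n)$ depth overhead and the $25$-bit element encodings). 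The main obstacle is this last bookkeeping step: the exponent $20\kappa+2$ is asserted to be exact up to the $\tilde O$, so one must chase the precise depth of the UGP-Match circuit — how the $5$-bit-per-point encoding, the per-position selector depth, and the $\lceil 2\kappa\lg n\rceil$-deep multiplication tree compose — and then square the base of the Barrington Transform against that depth. Correctness and security are essentially bookkeeping on top of \cite{FKN94}; the honest work is the circuit-depth accounting that produces the constant $20$.
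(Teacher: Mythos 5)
Your correctness and security argument follows the same route as the paper: the paper simply observes that UGP-Match is an instance of the FKN protocol and inherits correctness and security from \cite{FKN94}, and your expansion of this via Lemma \ref{lemma:FKN2decomp}, the Blinding Lemma, and an explicit simulator is consistent with that (and with Lemma \ref{lemma:ccbps-2decompuniv}). That part is fine.

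The gap is in the complexity bound, which is the only substantive new content of the theorem, and you explicitly defer it as ``bookkeeping'' without doing it --- and the one concrete depth claim you do make is wrong in a way that changes the exponent. You describe the Select block as ``a constant-depth selection per position,'' but the index $k_i$ of each group-program pair is part of the subscriber's private input, supplied to the circuit in binary; extracting the bit $\bbm_{k_i}$ therefore requires an $n$-to-$1$ multiplexer, which the paper charges at depth $\lg n + \lg\lg n$, followed by depth $2$ to choose between the two (unary, $25$-bit) group elements. The Multiply block is a balanced tree of depth $2\kappa\lg n$ in which each $S_5$ multiplication costs depth $5$ in this representation, giving $10\kappa\lg n$. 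The total depth of the UGP-Match circuit is thus $(10\kappa+1)\lg n + \lg\lg n + 2$, and applying Theorem \ref{thm:barrington-transform} gives a final group program of length $4^{(10\kappa+1)\lg n + \lg\lg n + 2} = 16\, n^{20\kappa+2}(\lg n)^2 = \tilde{O}(n^{20\kappa+2})$. Note that the ``$+2$'' in the exponent comes precisely from the multiplexer depth you treated as constant, so your sketch as written would land at roughly $\tilde{O}(n^{20\kappa})$ rather than the stated bound; to prove the theorem as stated you must fix the representation (unary $S_5$ elements, binary indices), pin down the per-gate constants ($5$ per group multiplication, $\lg n + \lg\lg n$ for the mux), and only then exponentiate through the Barrington Transform.
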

\begin{proof}
  The proof that UGP-Match is correct and secure follows directly from
  the proof in \cite{FKN94} for the FKN model.

  All that remains to do is  to bound the complexity of UGP-Match.  We
  now  provide a detailed  description of  the UGP-Match  Circuit $f$.
  See  Figure  \ref{fig:select-multiply}.  This  circuit  takes in  as
  input  $\bbm$  and  the  bit-representation  of  the  group  program
  obtained  from  applying   the  Barrington  Transform  (see  Theorem
  \ref{thm:barrington-transform}) to  $\bbP$.  The group  program is a
  sequence of $L  = n^{2\kappa}$ group program elements  each of which
  is two group  elements and an index (into  $\bbm$.  We represent the
  group elements  which are permutations  of $S_5$ in unary,  e.g., we
  would represent the  permutation $( 2 3 4 5 1)$  as the bit sequence
  $00010\;00100\;01000\;10000\;00001$  (the spacings  are  placed  for
  convenience  of  reading).   We  note  that this  is  not  the  most
  economical  representation in terms  of bit-length  but what  we are
  ultimately  looking to  minimize is  the length  of the  final group
  program,  i.e. the  depth  of  the UGP-Match  Circuit  and for  that
  purpose  this  representation is  close  to  optimal.   Each of  the
  indices in binary is represented  using $\lg n$ bits.  The UGP-Match
  Circuit is a Select  block followed by a divide-and-conquer Multiply
  block.   In the Select  block, corresponding  to each  group program
  index  the value  of the  bit  of $\bbm$  is extracted  using a  mux
  (multiplexer, see \cite{WikiMux}) which  is a circuit of depth $\lg
  n  + \lg\lg  n$.  The  corresponding group  program element  is then
  selected  using  a circuit  of  depth  $2$.   Multiplying two  group
  program elements takes a circuit of depth $5$ and hence the Multiply
  block has depth $(5 \times  2\kappa)\lg n$ (recall that the group program
  $\bbP$  has length  $n^{2\kappa}$).   Thus the  total  depth of  the
  UGP-Match  Circuit is  $(10\kappa+1)\lg  n  + \lg\lg  n  + 2$.   The
  Barrington Transform  (see Theorem \ref{thm:barrington-transform} of
  UGP-Match Circuit gives us a final group program under the FKN model
  of length $\tilde{O}(n^{20\kappa+2})$.

\end{proof}

\begin{corollary}
  UGP-Match  is an  efficient  and secure  protocol  for matching  any
  predicate in \NCone.
\end{corollary}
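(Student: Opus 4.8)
The plan is to observe that the Corollary is essentially a restatement of the preceding Theorem combined with the definition of $\NCone$. First I would unpack what needs to be shown: that (i) UGP-Match correctly and securely solves the C-CBPS problem, and (ii) it does so ``efficiently'', i.e., with communication/computation polynomial in $n$, for \emph{any} predicate in $\NCone$. Point (i) is immediate from the Theorem, which already asserts that UGP-Match is an information-theoretically secure protocol for the C-CBPS model; the Theorem's correctness and security were inherited from the \cite{FKN94} analysis of the FKN model via the equivalence in Lemma \ref{lemma:ccbps-2decompuniv}. So the only real content of the Corollary is to reconcile the parameterization: a predicate $\bbP \in \NCone$ is, by definition, computable by a circuit of depth $O(\lg n)$ with bounded fan-in, hence of depth at most $\kappa \lg n$ for some constant $\kappa$, which is exactly the hypothesis of the Theorem.

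The key steps, in order, would be: (1) Fix an arbitrary $\bbP \in \NCone$ over $n$-bit metadata and let $\kappa$ be the constant with $\mathrm{depth}(\bbP) \le \kappa \lg n$ (using the paper's convention that NOT gates do not count toward depth, which only changes $\kappa$ by a constant factor via De Morgan pushdown). (2) Apply the Theorem with this $\kappa$: UGP-Match is an information-theoretically secure C-CBPS protocol for $\bbP$ with complexity $\tilde{O}(n^{20\kappa+2})$. (3) Observe that since $\kappa$ is a fixed constant, $\tilde{O}(n^{20\kappa+2})$ is polynomial in $n$; moreover the publisher and subscriber run in deterministic polynomial time given the shared randomness (they each evaluate the Barrington Transform of the fixed UGP-Match Circuit, select one element of each pair according to one of their input bits, and blind), so the protocol is efficient in the required sense. (4) Conclude that UGP-Match is an efficient and secure protocol for matching any predicate in $\NCone$.

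I do not expect a genuine obstacle here — the Corollary is a packaging statement. The one point requiring a sentence of care is the bridge between the asymptotic class $\NCone$ (depth $O(\lg n)$) and the concrete parameter $\kappa$: one must note that ``efficient'' is measured with $\kappa$ held fixed as $n \to \infty$, so each predicate in $\NCone$ gets \emph{its own} constant $\kappa$ and hence its own polynomial degree $20\kappa+2$ in the running time bound; the degree is not uniform across all of $\NCone$, but that is exactly what ``$\P$'' (and the informal word ``efficient'') permits. A secondary minor point is the NOT-gate convention: standard $\NCone$ circuits may have NOT gates on every wire, but since the Barrington Transform's length depends only on AND/OR depth, and NOT gates can be pushed to the inputs with only a constant-factor depth blowup, the class captured by ``depth $\kappa \lg n$ ignoring NOT gates'' is still all of $\NCone$. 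With these two remarks in place the Corollary follows directly from the Theorem.
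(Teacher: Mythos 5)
Your proposal is correct and matches the paper's (implicit) reasoning: the paper states this corollary without a separate proof, treating it exactly as you do --- an immediate consequence of the preceding theorem once one notes that any $\NCone$ predicate has depth at most $\kappa\lg n$ for some constant $\kappa$, making the $\tilde{O}(n^{20\kappa+2})$ bound polynomial. Your added remarks on the NOT-gate depth convention and the per-predicate constant $\kappa$ are sensible care, but no different route is taken.
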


\subsection{Fixed Structure Group Programs}
We now state and prove the key lemma concerning fixed structure group programs. 

\begin{lemma}[Fixed structure programs yield 2-decomposable randomized
  encodings of universal functions]
\label{lemma:fixed-structure}
Fixed structure programs  are convertible to 2-decomposable randomized
encodings of universal functions with no loss of efficiency.
\end{lemma}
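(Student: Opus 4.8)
The plan is to establish the equivalence by exhibiting an explicit protocol for the C-CBPS model built directly from a fixed structure group program, and then invoking Lemma~\ref{lemma:ccbps-2decompuniv} to convert the protocol back into a 2-decomposable randomized encoding of the universal function. Recall (Definition~\ref{def:fixed-structure}) that a fixed structure group program for the class of depth-$\kappa\lg n$ predicates is a family of $(\alpha,1)$-preserving group programs $(g_1,\ldots,g_{L+1}),(k_1,\ldots,k_L)$ all sharing the same index sequence $(k_1,\ldots,k_L)$; only the ``data'' elements $g_1,\ldots,g_{L+1}\in S_5$ vary from predicate to predicate. The key structural fact we will exploit is the clean separation of roles: the index sequence $(k_i)$ is \emph{public} (it is the same for every predicate, hence reveals nothing), the elements $g_i$ encode the subscriber's predicate $\bbP$, and the bits $x_{k_i}=\bbm_{k_i}$ selected in each slot encode the publisher's metadata.

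**The main steps.**

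First I would specify the encoding. Given the shared random string $r$, interpret it as a sequence of blinders $r_0, r_1, \ldots, r_L \in G$. The subscriber, holding $\bbP$, runs the $(\alpha,1)$-preserving Barrington Transform (Theorem~\ref{thm:alpha-one-barrington-transform}) to obtain $g_1,\ldots,g_{L+1}$, and then produces the sequence $\hat{f}_2(\bbP,r)$ consisting of the blinded ``data'' contributions $r_{i-1}^{-1}\, g_i\, r_i$ in the appropriate interleaved positions (with suitable conventions at the two ends so that $g_{L+1}$ is absorbed correctly). The publisher, holding $\bbm$, produces $\hat{f}_1(\bbm,r)$: for each $i$ the term $\alpha^{\bbm_{k_i}}$, conjugated by the surrounding blinders so that when the broker interleaves the two sequences in the canonical order determined by the (public) index sequence, the product telescopes to $(\prod_{i=1}^{L} g_i\cdot\alpha^{\bbm_{k_i}})\,g_{L+1} = \alpha^{\bbP(\bbm)}$. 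Correctness then follows from the $\alpha$-computation property of the $(\alpha,1)$-preserving group program together with the value-preservation half of the Blinding Lemma~(Lemma~\ref{lemma:blindinglemma}). Security follows from the uniform-distribution half of the Blinding Lemma: the blinded sequence the broker sees is uniformly distributed over all sequences with value $\alpha^{\bbP(\bbm)}$, so SIM can be built from $\bbP(\bbm)$ alone, and crucially the index sequence the broker also sees is identical across all predicates, so it leaks nothing about $\bbP$. Finally, apply Lemma~\ref{lemma:ccbps-2decompuniv} to read off a 2-decomposable randomized encoding $\langle \hat{f}_1(\bbm,r), \hat{f}_2(\bbP,r)\rangle$ of the universal function $f(\bbm,\bbP)=\bbP(\bbm)$.

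**Where the subtlety lies, and the efficiency claim.**

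The phrase ``with no loss of efficiency'' is what needs care: I must check that $\hat{f}_1$ and $\hat{f}_2$ are each computable in deterministic polynomial time given the shared randomness, and that the length of the resulting encoding is exactly $L$ (the length of the fixed structure group program) up to the constant-size overhead of naming $S_5$ elements, with \emph{no} extra Barrington Transform applied on top. This is precisely the payoff over UGP-Match: because the index sequence is fixed and public, the broker never needs a universal \emph{circuit} to ``route'' metadata bits to group-program slots — the routing is hard-wired into the protocol — so we avoid the second, length-blowing invocation of the Barrington Transform. The main obstacle I anticipate is bookkeeping the conjugation pattern at the interleaving step so that the publisher's $\alpha^{\bbm_{k_i}}$ factors and the subscriber's $g_i$ factors end up adjacent in the correct order after the broker merges the two blinded subsequences; getting the indexing of the blinders $r_i$ consistent between the two parties (who must agree on which blinder sits between which pair of adjacent product terms) is the place where an off-by-one or a transposed conjugation would break either correctness or the telescoping needed for the Blinding Lemma to apply. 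Once that alignment is fixed, the rest is a direct appeal to Lemma~\ref{lemma:blindinglemma}, Theorem~\ref{thm:alpha-one-barrington-transform}, and Lemma~\ref{lemma:ccbps-2decompuniv}.
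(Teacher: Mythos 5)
Your proposal is correct and takes essentially the same route as the paper's own proof: instantiate the two halves of the fixed structure program (the publisher selecting $\alpha^{\bbm_{k_i}}$ per index-sequence slot, the subscriber supplying the interstitial $g_i$'s), blind them in coordination with the shared randomness, and derive Correctness and Security of the 2-decomposable encoding of $f(\bbm,\bbP)=\bbP(\bbm)$ from Lemma~\ref{lemma:blindinglemma} together with the fact that the index sequence is independent of $\bbP$. The only cosmetic differences are your closing detour through Lemma~\ref{lemma:ccbps-2decompuniv} (your construction already \emph{is} the encoding, so that step is redundant) and your citing Theorem~\ref{thm:alpha-one-barrington-transform} as producing the $g_i$'s, whereas the lemma only presupposes the conversion guaranteed by the fixed-structure hypothesis of Definition~\ref{def:fixed-structure}.
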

\begin{proof}
  The  conversion of  a fixed  structure program  to  a 2-decomposable
  randomized    encodings   of    universal   functions    is   fairly
  straightforward involving instantiation and coordinated blinding.

  Consider a class of predicates $\bbP$ convertible to fixed structure
  group      programs.       Let     $(g_1,g_2,\ldots,      g_{L+1}),$
  $(k_1,k_2,\ldots,k_L)$ be the fixed structure group program, i.e., a
  $(\alpha,1)$-preserving  group program with  a fixed  index sequence
  (the index sequence is the same independent of the specific function
  that the  group program is  computing though the  interstitial group
  elements $(g_1,  g_2,\ldots, g_{L+1})$ would depend  on the specific
  function). We now need to demonstrate two functions $\hat{f}_1(\bbm,
  r)$  and   $\hat{f}_2(\bbP,  r)$  satisfying   the  requirements  of
  Definition  \ref{def:2-decomp}, with  the  additional constraint  of
  universality, namely that $f(\bbm, \bbP) = \bbP(\bbm)$.

  Given  a   specific  instance   of  metadata  $\bbm$   the  function
  $\hat{f}_1$ first selects $\alpha$ or $1$ as appropriate for each of
  the index sequence pairs; note  that this is done independent of the
  specific  predicate $\bbP$.  Then $\hat{f}_1$  blinds  the resulting
  sequence  using  the randomness  $r$.  Similarly,  depending on  the
  specific  predicate   $\bbP$  $\hat{f}_2$  converts   to  the  fixed
  structure  group program and  gets a  specific instantiation  of the
  interstitial group elements $(g_1, g_2,\ldots, g_{L+1})$.  Then this
  sequence  is blinded  by $\hat{f}_2$  using $r$.   Observe  that the
  fixed structure  is crucial for  the construction of  $\hat{f}_1$ so
  that it is independent of $\bbP$. u
 
  It  remains to  prove  Correctness and  Security  as per  Definition
  \ref{def:2-decomp}.   First, the  existence of  REC  and Correctness
  follows because  of the appropriate cancelation of  the blinders and
  so  $\bbP(\bbm)$  can  be  recovered exactly  from  multiplying  the
  elements of the final  assembled group program.  Next, the existence
  of  SIM and  Security -  observe  that independent  of the  specific
  instance  the broker  receives shares  of  the same  form from  both
  publisher  and subscriber  and by  Lemma \ref{lemma:blindinglemma}
  (Blinding  Lemma) these  shares are  uniformly distributed  over all
  possible  sequences with  same  final value  and  therefore SIM  can
  generate the  outputs with the same distribution  as $\hat{f}_1$ and
  $\hat{f}_2$  by  generating all  but  one  element  of the  sequence
  completely and uniformly at random, then generating the last element
  so that the value of the sequence is the given final value, and then
  splitting  the group  program into  the two  parts  corresponding to
  $\hat{f}_1$ and $\hat{f}_2$.
\end{proof}

Observe that an implication of the above Lemma is that the fixed structure 
of the group program implicitly encodes a universal function. 

The   below   corollary   follows   from   the   two   lemmas,   Lemma
\ref{lemma:fixed-structure}  and  \ref{lemma:ccbps-2decompuniv}, since
2-decomposable   randomized  encodings   of  universal   functions  are
isomorphic to protocols for the C-CBPS model.

\begin{corollary}
\label{cor:fixed-structure}
  Fixed structure programs yield  secure C-CBPS protocols with no loss
  of efficiency.
\end{corollary}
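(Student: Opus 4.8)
The plan is to obtain the corollary by composing the two efficiency-preserving correspondences that have already been established, without doing any new work. First I would invoke Lemma~\ref{lemma:fixed-structure}: starting from a class of predicates $\bbP$ that admits a fixed structure group program $(g_1,\ldots,g_{L+1}),(k_1,\ldots,k_L)$, that lemma produces a 2-decomposable randomized encoding $\langle\hat{f}_1(\bbm,r),\hat{f}_2(\bbP,r)\rangle$ of the universal function $f(\bbm,\bbP)=\bbP(\bbm)$, and it does so ``with no loss of efficiency'' in the precise sense that the combined output of $\hat{f}_1$ and $\hat{f}_2$ is exactly the blinded $(\alpha,1)$-preserving group program, of length $L+1$ --- the encoding being nothing more than per-coordinate selection of $\alpha$ or $1_{S_5}$ (for $\hat{f}_1$) and instantiation of the interstitial elements (for $\hat{f}_2$), each followed by coordinated blinding.

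Next I would invoke Lemma~\ref{lemma:ccbps-2decompuniv}, which gives a 1-1 isomorphism between 2-decomposable randomized encodings of universal functions and protocols for the C-CBPS model. Concretely, one sets $E_r(\bbm)=\hat{f}_1(\bbm,r)$ and $E_r(\bbP)=\hat{f}_2(\bbP,r)$; the Correctness and Security clauses of Definition~\ref{def:2-decomp} then become, verbatim, the Correctness and Security clauses of the C-CBPS definition, with REC supplying the broker's computation of $\bbP(\bbm)$ and SIM supplying the simulator that witnesses that the broker learns nothing else (SIM exists by the Blinding Lemma, Lemma~\ref{lemma:blindinglemma}, as in the proof of Lemma~\ref{lemma:fixed-structure}). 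Chaining the two correspondences yields a secure C-CBPS protocol directly from any fixed structure group program.

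The only point requiring care is the efficiency bookkeeping, i.e.\ checking that the two occurrences of the length parameter line up. The complexity measure $L$ of a C-CBPS protocol is, by definition, the length of the sequence of group elements that the broker interlaces and multiplies; under the isomorphism of Lemma~\ref{lemma:ccbps-2decompuniv} this sequence \emph{is} the blinded fixed structure group program, so the resulting protocol inherits the length of the group program (up to the single additive $g_{L+1}$ term and a factor reflecting how group elements are encoded as bits), with no polynomial --- let alone exponential --- blowup. This is exactly what distinguishes the route through fixed structure programs from the route through a universal circuit, and it is the content of the phrase ``no loss of efficiency'' in the statement.

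I expect no genuine obstacle here: once Lemmas~\ref{lemma:fixed-structure} and~\ref{lemma:ccbps-2decompuniv} are in hand, the corollary is a one-line consequence, the entire substance being the identification of the transmitted sequence in the randomized-encoding picture with the sequence the broker composes in the C-CBPS picture. If anything needs spelling out it is merely that the isomorphism of Lemma~\ref{lemma:ccbps-2decompuniv} acts as the identity on the transmitted messages and only relabels the three parties, from which efficiency preservation is immediate.
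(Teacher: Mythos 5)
Your proposal is correct and follows exactly the paper's route: the corollary is obtained by composing Lemma~\ref{lemma:fixed-structure} with the isomorphism of Lemma~\ref{lemma:ccbps-2decompuniv}, which is all the paper itself does. Your additional remark that the broker's composed sequence is literally the blinded fixed structure group program, so the length parameter $L$ is inherited unchanged, is a harmless (and accurate) elaboration of the phrase ``no loss of efficiency.''
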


\subsection{FSGP-Match}

With      Lemma     \ref{lemma:fixed-structure}      and     Corollary
\ref{cor:fixed-structure} in hand, all we need to do is to demonstrate
a fixed structure  group program of the right  complexity.  We give an
intuitive    description   of   the    main   idea.     Applying   the
$(\alpha,1)$-preserving Barrington  Transform to the  predicate $\bbP$
yields  an  $(\alpha,1)$-preserving  group  program but  not  a  fixed
structure  group  program  because   the  index  sequence  would  vary
depending  on  the predicate  $\bbP$.   So,  then  we substitute  each
instance of  a group  program pair  in this group  program by  a fixed
$(\alpha, 1)$-preserving group program block we call the Fixed Selector
Group  Program.   This creates  a  fixed  structure.   And by  padding
appropriately we can fix the length to be independent of the inputs as
well. It  remains to  describe the Fixed  Selector Group  Program. The
Fixed  Selector Group  Program is  obtained by  applying  an $(\alpha,
1)$-preserving   Barrington   Transform    to   the   Fixed   Selector
Circuit. Figure  \ref{fig:fixed-selector-circuit} is a  visual description
of the Fixed Selector Circuit. The  Fixed Selector Circuit is an OR of
ANDs.  It  has one  AND gate for  each bit  of the metadata  $\bbm$ as
input. The  other input  to the AND  gate is  a ``new'' bit  under the
control of  the subscriber. Since  the subscriber knows  the predicate
$\bbP$ it  knows exactly which bit  of the input ($\bbm$)  it needs in
that particular  block (based on the  index of the  group program pair
that the  block is replacing). Once the  subscriber instantiates these
bits approrpriately  and blinds then the Fixed  Selector Group Program
is just a  block with a fixed index sequence  since all the dependence
on  the  predicate $\bbP$  is  factored  into  the interstitial  group
elements.  Thus  the overall  index sequence is  fixed. Note  that for
blocks in the padding the subscriber  can set all new bits to $0$ thus
effectively reducing the padding part of the group program to a no-op,
something that evaluates to $1$.

\begin{figure}
\centering
\includegraphics[width=3.4in, height=3in]{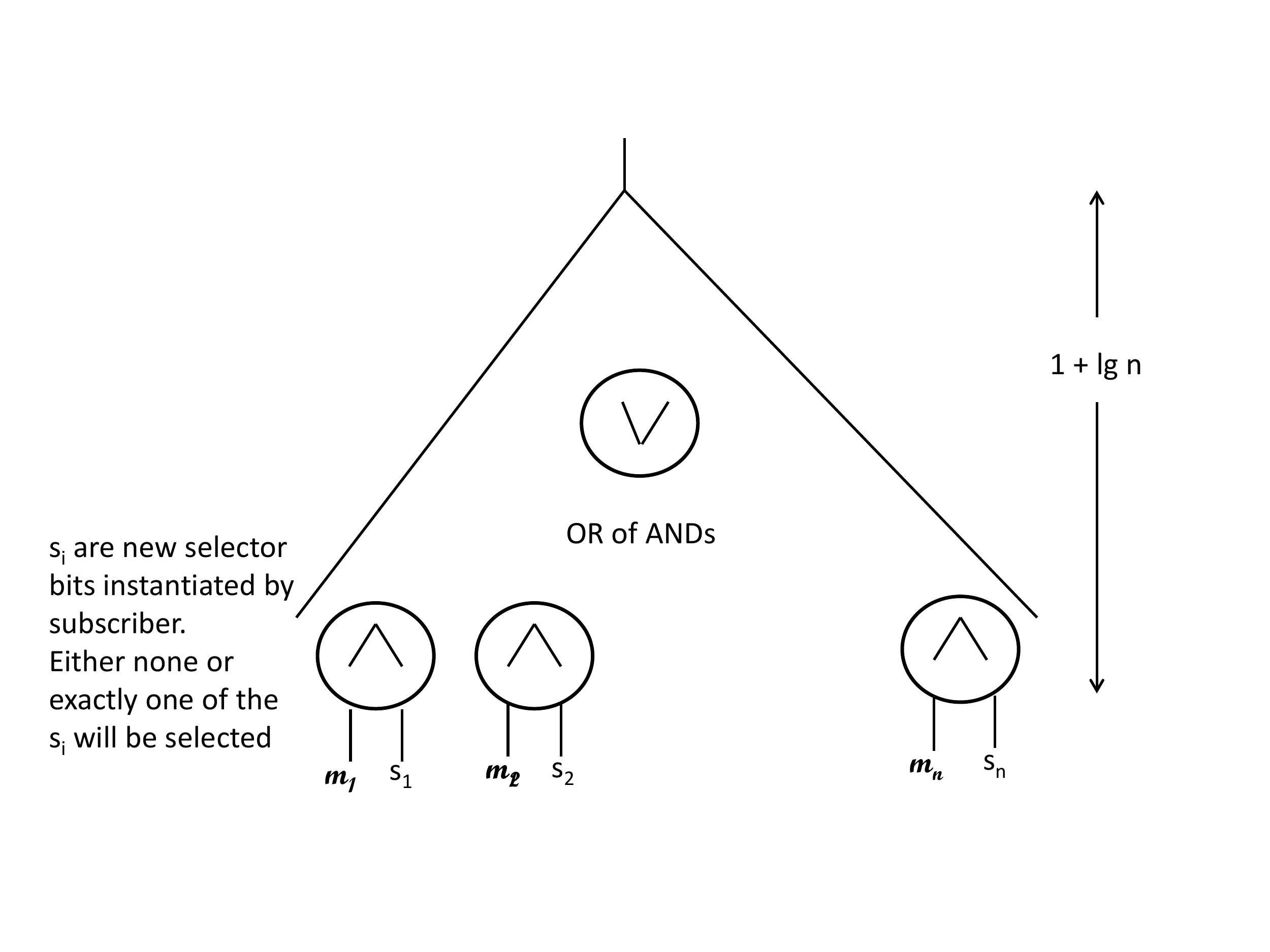}
\vspace{-.6in}
\caption{Fixed Selector Circuit}
\label{fig:fixed-selector-circuit}
\end{figure}

\begin{framed}
\underline{FSGP-Match}
\begin{enumerate}
\item  The  publisher and  subscriber  register  with  the broker  the
  precise  form of  their inputs.   In particular  the  publisher must
  specify the number $n$ of bits of $\bbm$ the meta-data (if there are
  fewer relevant  bits, the  remaining bits can  be padded  with dummy
  bits).  And the subscriber specifies  $D = 2\kappa\lg n$ the maximal
  depth of the predicate $\bbP$.

\item The  broker now  comutes the form  of the fixed  structure group
  program   by  applying   the  $(\alpha,   1)$-preserving  Barrington
  Transform to  any circuit of  depth $D$ to create  the corresponding
  $(\alpha, 1)$-preserving group program.  Again, only the form of the
  group  program is  relevant at  this stage,  not the  values  of the
  specific  interstitial  group  elements  or  indices  in  the  index
  sequence. It  then replaces each  group program pair with  the Fixed
  Selector  Group  Program  and  returns the  entire  composite  group
  program  back to  the publisher  and subscriber.  At this  point the
  values  of the  interstitial group  elements do  not matter  but the
  index  sequence, which  is fixed,  does  matter. The  program is  an
  $\alpha, 1)$-preserving  group program with indices  that index into
  the metadata bits.

\item The  publisher selects one of  $\alpha$ or $1$  depending on the
  value of the corresponding metadata bit for each group program pair,
  blinds them  and sends back  to the broker. The  subscriber, applies
  the  $(\alpha, 1)$-preserving Barrington  Transform to  the specific
  instance  $\bbP$ to  get the  concrete values  for  the interstitial
  group elements,  appropriately sets the  ``new'' bits for  the Fixed
  Selector Group  Program blocks, and  blinds the entire  sequence and
  sends to the broker.

\item The broker  puts all the blinded elements  together in the right
  sequence and multiplies them.  If  he gets $1$ he forwards along the
  (encrypted)  data from  the  publisher to  the  subscriber, else  he
  withholds it.
\end{enumerate}
\end{framed}

\begin{theorem}
  \label{thm:FSGP}
  Given metadata  of size at most  $n$ and predicate of  depth at most
  $\kappa\lg  n$, FSGP-Match  is  an information-theoretically  secure
  protocol for the C-CBPS model with complexity $4n^{2\kappa+2}$.
\end{theorem}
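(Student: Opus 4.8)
The plan is to reduce Theorem~\ref{thm:FSGP} to the single task of exhibiting a fixed structure group program of the right length that $\alpha$-computes the universal function $f(\bbm,\bbP)=\bbP(\bbm)$, with $\bbP$ ranging over depth-$\kappa\lg n$ circuits on $n$ inputs. Once that is done, Lemma~\ref{lemma:fixed-structure} converts the fixed structure program into a 2-decomposable randomized encoding of this universal function, Lemma~\ref{lemma:ccbps-2decompuniv} (equivalently, Corollary~\ref{cor:fixed-structure}) turns that into a C-CBPS protocol, and the phrase ``with no loss of efficiency'' in both statements says the parameter $L$ of the resulting protocol equals the length of the group program. Correctness and perfect information-theoretic security are then inherited verbatim --- ultimately from the Blinding Lemma, Lemma~\ref{lemma:blindinglemma} --- so the only quantitative obligation is to pin the length down to exactly $4n^{2\kappa+2}$.

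First I would build the outer skeleton: apply the $(\alpha,1)$-preserving Barrington Transform (Theorem~\ref{thm:alpha-one-barrington-transform}) to $\bbP$, a circuit of depth at most $\kappa\lg n$, obtaining an $(\alpha,1)$-preserving group program of length $4^{\kappa\lg n}=n^{2\kappa}$ whose value on $\bbm$ is $\alpha^{\bbP(\bbm)}$. This program reads bit $k_i$ of $\bbm$ in its $i$-th slot, and the index sequence $(k_1,\dots,k_{n^{2\kappa}})$ varies with $\bbP$ --- precisely the obstruction to a fixed structure. I would then replace the $i$-th slot by a copy of the Fixed Selector Group Program, namely the $(\alpha,1)$-preserving Barrington Transform of the Fixed Selector Circuit $\bigvee_{j=1}^{n}(m_j\wedge c_j)$, where $c_1,\dots,c_n$ are fresh bits owned by the subscriber. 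The subscriber, knowing that slot $i$ must read $m_{k_i}$, instantiates $c_j=1$ iff $j=k_i$; then the block computes $m_{k_i}$ and hence has value $\alpha^{m_{k_i}}$, the very element it replaced, while reading the metadata bits in the fixed order $1,2,\dots,n$. Since the Fixed Selector Circuit has depth $\lg n+1$ (an OR-tree of depth $\lg n$ over one layer of AND gates), each block has length $4^{\lg n+1}=4n^2$.

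Next I would assemble and account. Concatenating the $n^{2\kappa}$ blocks, and padding $\bbP$'s transform up to exactly $n^{2\kappa}$ slots with extra blocks in which the subscriber sets every $c_j=0$ (so that each such block evaluates to $1_{S_5}$ and leaves the product unchanged), the composite is an $(\alpha,1)$-preserving group program of total length $n^{2\kappa}\cdot 4n^2=4n^{2\kappa+2}$ whose global index sequence is $1,2,\dots,n,1,2,\dots,n,\dots$, the same for every $\bbP$, with all $\bbP$-dependence absorbed into the interstitial group elements (the outer Barrington constants together with the $c_j$-instantiated selector constants). This is a fixed structure group program in the sense of Definition~\ref{def:fixed-structure}. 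Because substituting a sub-sequence whose product equals $\alpha^{m_{k_i}}$ in place of the single factor $\alpha^{m_{k_i}}$ does not change the value of the enclosing sequence, the composite still evaluates to $\alpha^{\bbP(\bbm)}$, i.e., it $\alpha$-computes $f(\bbm,\bbP)=\bbP(\bbm)$. Feeding this into Corollary~\ref{cor:fixed-structure} yields the claimed secure C-CBPS protocol, and the length computed above is its complexity $L=4n^{2\kappa+2}$.

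I expect the delicate point to be the substitution/composition argument --- checking that replacing a degree-one factor by the Fixed Selector block (i) leaves the outer program still $\alpha$-computing the same function, and (ii) really does erase all $\bbP$-dependence from the index sequence, pushing it entirely into interstitial elements that the subscriber supplies --- together with the depth bookkeeping ($\lg n$ for the OR-tree, $+1$ for the AND layer, hence $4^{\lg n+1}=4n^2$ per block and $4n^{2\kappa+2}$ overall) needed to hit the exact constant. The remainder is routine: verifying that the all-zero-$c_j$ padding blocks act as the identity, and noting that correctness and security transfer unchanged through Lemma~\ref{lemma:fixed-structure}.
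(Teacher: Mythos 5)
Your proposal is correct and follows essentially the same route as the paper's own proof: apply the $(\alpha,1)$-preserving Barrington Transform to $\bbP$ to get an outer program of length $n^{2\kappa}$, substitute each slot by the Fixed Selector Group Program of length $4^{1+\lg n}=4n^2$ (with all-zero selector bits for padding), invoke Lemma~\ref{lemma:fixed-structure} and Corollary~\ref{cor:fixed-structure} for correctness and security, and multiply the lengths to get $4n^{2\kappa+2}$. Your write-up is in fact more explicit than the paper's about the value-preservation of the substitution and the padding blocks, but the decomposition and accounting are identical.
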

\begin{proof}
  FSGP-Match  is  obtained  by  applying  the  $(\alpha,1)$-preserving
  Barrington   Transform  to   the  predicate   $\bbP$  to   yield  an
  $(\alpha,1)$-preserving group program where each instance of a group
  program pair is substituted by the Fixed Selector Group Program.  It
  is clear  that FSGP-Match produces  a fixed structure  group program
  and hence it follows  from Lemma \ref{lemma:fixed-structure} that it
  is correct and secure.

  All that  remains to  do is to  bound the complexity  of FSGP-Match.
  Recall  that group  program pair  in  the group  program (of  length
  $n^{2\kappa}$   is   substituted  by   the   Fixed  Selector   Group
  Program.   The   Fixed    Selector   Group   circuit   (see   Figure
  \ref{fig:fixed-selector-circuit}) has  depth $1 +  \lg n$ and  hence the
  resulting Fixed Selector Group Program has length $4n^2$ which means
  that the final group program has a length $4n^2 \times n^{2\kappa} =
  4n^{2\kappa + 2}$.
\end{proof}

\subsection{OFSGP-Match}
OFSGP-Match is basically FSGP-Match with the additional innovation being that we use an 
Optimized Fixed Selector Group Program. See Figure \ref{fig:OFSGP}.

\begin{figure}
\centering
\includegraphics[width=3.4in, height=3in]{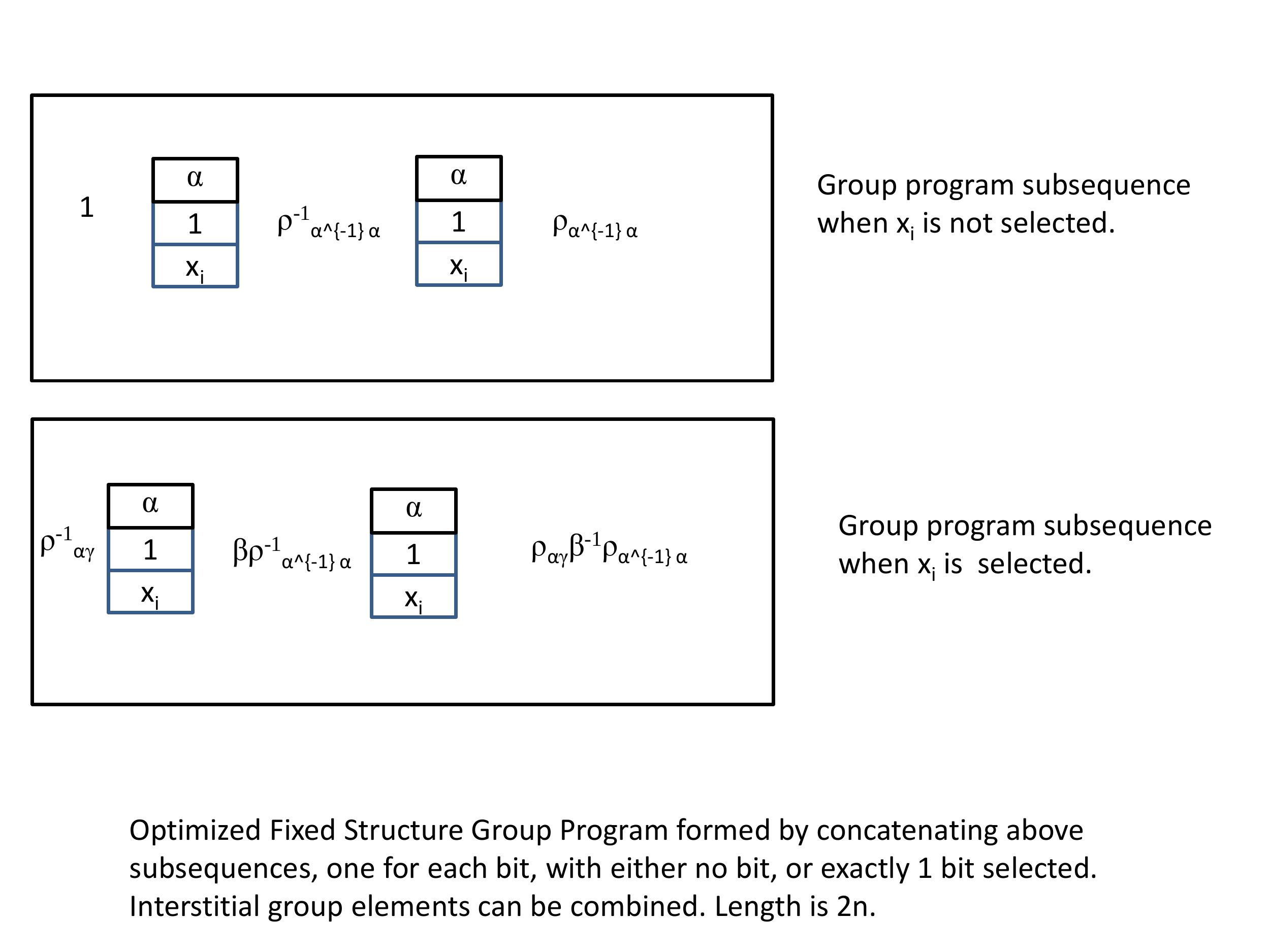}
\caption{(Hand) Optimized Fixed Selector Group Program}
\label{fig:OFSGP}
\end{figure}

\begin{theorem}
  Given metadata  of size at most  $n$ and predicate of  depth at most
  $\kappa\lg  n$,  OFSGP-Match  is an  information-theoretically  secure
  protocol     for     the     C-CBPS    model     with     complexity
  $2n^{2\kappa+1}$.
\end{theorem}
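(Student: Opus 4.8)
The plan is to mirror the proof of Theorem~\ref{thm:FSGP} (FSGP-Match), since OFSGP-Match is obtained from it by swapping out just one component. First I would dispatch correctness and security essentially for free. OFSGP-Match still applies the $(\alpha,1)$-preserving Barrington Transform to the predicate $\bbP$ and then replaces each resulting group program pair by a fixed block --- now the \emph{Optimized} Fixed Selector Group Program in place of the one obtained by running Barrington on the Fixed Selector Circuit. The one thing that must be checked is that this replacement block still has an index sequence that is \emph{independent of} $\bbP$: it must index into the metadata bits of $\bbm$ in a fixed pattern, with every predicate-specific choice pushed into the interstitial group elements (and the subscriber's ``new'' bits), and with an all-selector-bits-off setting collapsing a padding block to a no-op evaluating to $1_{S_5}$, exactly as in FSGP-Match. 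Granted that, the composite is again a fixed structure group program, and correctness and information-theoretic security follow verbatim from Lemma~\ref{lemma:fixed-structure}, hence via Corollary~\ref{cor:fixed-structure} and Lemma~\ref{lemma:ccbps-2decompuniv}; the Blinding Lemma (Lemma~\ref{lemma:blindinglemma}) already supplies the simulator, so no new security argument is required.

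The substantive part is the complexity bound, which reduces to exhibiting the Optimized Fixed Selector Group Program and showing it has length $2n$. I would write it out explicitly: interstitial elements $g_1,\dots,g_{2n+1}\in S_5$ together with a fixed index sequence over $\bbm$ in which each of the $n$ metadata bits appears at most twice (so $2n$ indexed slots total), chosen so that when the subscriber instantiates the $g_i$'s according to which bit $j$ the block must read, the product telescopes: every slot indexing a bit $i\neq j$ collapses to the identity (using that $\alpha$, $\alpha^{-1}$, and other $5$-cycles are conjugate in $S_5$, so the needed cancelling conjugators exist), while the slots indexing bit $j$ contribute exactly $\alpha^{\bbm_j}$. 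This hand-optimization is precisely where we beat the generic construction: instead of paying $4^{1+\lg n}=4n^2$ for a depth-$(1+\lg n)$ OR-of-ANDs selector circuit run through Barrington, we pay only $2n$.

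Once the block has length $2n$, the accounting is immediate. The outer $(\alpha,1)$-preserving group program for a predicate of depth $\kappa\lg n$ has length $n^{2\kappa}$, and substituting each of its pairs by a length-$2n$ block yields a final group program of length $2n\cdot n^{2\kappa}=2n^{2\kappa+1}$; this is $L$, and as with the earlier protocols the bound is exact, with no big-oh. Combined with the inherited correctness and security, this establishes the theorem.

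The step I expect to be the main obstacle is verifying correctness of the hand-crafted block: carrying out the case analysis on the bits of $\bbm$ to confirm that the chosen interstitial elements really do make the product collapse to $\alpha^{\bbm_j}$ in every case, and simultaneously confirming that the index sequence is genuinely $\bbP$-independent so that the fixed structure hypothesis of Lemma~\ref{lemma:fixed-structure} is met. Everything past that --- the length arithmetic and the security reduction --- is bookkeeping borrowed from the FSGP-Match analysis.
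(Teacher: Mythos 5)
Your proposal follows essentially the same route as the paper: correctness and security are inherited verbatim from the FSGP-Match argument via the fixed-structure lemma, and the complexity bound is just the substitution count $2n\cdot n^{2\kappa}=2n^{2\kappa+1}$ once the optimized selector block is known to have length $2n$. The paper itself gives even less detail than you do (it simply cites its figure for the length-$2n$ block), so your explicit sketch of the hand-crafted block and your flagged verification step go beyond, but do not diverge from, the paper's proof.
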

\begin{proof}
The proof is similar to the proof of Theorem \ref{thm:FSGP} so we do not repeat it. 

The Optimized Fixed Selector Group Program (see Fig. \ref{fig:OFSGP}) has length $2n$ which means
  that the final group program has a length $2n \times n^{2\kappa} =
  2n^{2\kappa + 1}$.
\end{proof}

\section{Implementation and Performance}
\label{sec:implementation}
We implemented all three of our protocols in Java. We will spare the reader the details of the code. As has already been explained the protocols are conceptually straightforward. The only protocol that was somewhat performant was OFSGP-Match. In particular, we considered the Hamming problem as that is representative of a typical function in the context of publish/subscribe. The publisher and the subscriber each have a private bit-vector of $n$ bits. They wish to compute a secure Hamming match, i.e., given a common threshold level (between $0$ and $\log_2 n$) they wish to exchange the actual content iff the Hamming distance between their respective vectors exceeds the threshold. The bottleneck from a computational standpoint is the cost of computing a match at the broker, which is basically the length of the resulting group program. Since, in our setting the subscribers submit predicates, we have our subscriber submit the Hamming distance circuit with its own bit-vector and the threshold hard-coded into it.

Our match algorithm implemented as a circuit, essentially, involved computing the difference vector of the two vectors, then sorting the bits using a Bose-Nelson sorting network \cite{BN62}, and outputting the threshold bit as the match. We used a Bose-Nelson circuit because they have the shortest depth (i.e. best constants) for small depths (even though AKS sorting networks \cite{AKS83} are asymptotically optimal - depth $O(\log n)$ - instead of $O(\sqrt(n))$ for Bose-Nelson).
We implemented the above match algorithm on a 3GHz quad-core i7 laptop. Our goal was to find the largest $n$ for which the match could be computed under 1s.
The results are presented in Table \ref{tab:hamming} in Appendix \ref{appendix:hamming}. Here $n$ is the length of the bit-vectors. $d$ is the depth of the resulting ciruit and the final column shows the length of the resulting OFSGP-Match sequence. The main takeaway is that even our best scheme becomes unusable for a relative simple function such as the Hamming distance once the bit vectors get to over 16 bits in length.

\junk{
*** KRASH - Talk about implementation details, architecture etc etc ***
This section describes the prototype library implementation of the SPAR-Match Algorithm, which corresponds to Task 5.1 of the WBS.  The implementation includes the following components.  Please refer to the companion code listing provided separately for low-level details of the implementation.
Generic Utilities:  Several utility functions including support for higher order functions from functional programming, list manipulation and shorthand are included in the prelude.

S5 Permutation Group Functions:  We provide functions to generate and represent members of the S5 permutation group.  7-bit, 15-bit, and a vector representation are supported.  Multiplication and inverse operations in S5 derived from first principles as well as a faster table look-up implementation are supported.

Barrington.s Transform:  We define S5 members (including the alpha and beta cycles, and other constants) used in Barrington.s transformation.  We implement the transformation for AND gates and NOT gates and a parser for an Acyclic Boolean Circuit comprised of AND gates and NOT gates in S-expression form.

SPAR-Match Algorithm: We implement the steps of the SPAR-Match algorithm performed by the publisher, subscriber and broker.  The publisher takes the raw input bits and prepares the publisher group elements and then blinds them using the Feige-Killian-Naor protocol to send to the broker for a secure match. The implementation converts the circuit to a group program given a subscriber predicate in circuit form along with any local constants. The group program is padded to the maximum length, publisher bit references are replaced with selector blocks, and the resulting group program is canonicalized to a form where alternating elements are from publisher and subscriber by multiplying the subscriber elements as applicable. The resulting Universal Group Program is blinded using the Feige-Killian-Naor protocol to send to the broker.  The broker interleaves the sequences from the publisher and subscriber and then multiplies them to evaluate a match (true if the result equals S5 member alpha).
A random number generator based on SHA1PRNG from the Java security library is used by our implementation.   A unit test to exercise the publisher, subscriber and broker functionality is provided. The solution is general enough to support any subscriber predicate within the circuit complexity class NC1.

Practical Pub/Sub Illustration: A simple end-to-end publish/subscribe functionality using SPAR-Match is illustrated in the implementation (see the companion code listing for details).  The illustration includes syntax definitions, example user-readable metadata and user-readable subscription predicates in S-expression format. Examples include intelligence report records, and stock ticker records.  We include functionality to parse and encode the metadata and subscriber predicates to bits and circuits that SPAR-Match can use.  A unit test that exercises the end-to-end flow is included.  The illustration is not intended to comprehensively cover everything that can be done with the SPAR-Match. From the illustration it is easy to see how more sophisticated parsers and a wider range of predicates within NC1 can be readily supported with additional implementation effort. 

Integration and Test Plan: In preparation for integration and internal test and evaluation, we have identified insertion points for SPAR-Match within the Siena system. HierarchicalDispatcher is the class within Siena that is responsible for managing subscriptions (using the subscribe() method) and forwarding publications (using the publish() method). We will replace the HierarchicalDispatcher match logic with an implementation of our algorithm. Figure 10 shows the software organization, covering Carzaniga and Wolf.s HierarchicalDispatcher in the siena package, Raiciu and Rosenblum.s OperatorPublisher and OperatorSubscriber classes in the securematching package, our new HierarchicalDispatcher in the spar package and our new Client, Producer and Consumer classes in the testfixture package.
}

\junk{
\section{Performance}

\section{Limitations}
\label{sec:limitations}
In theory, we can use such single message protocols
in the context of publish/subscribe, but with tremendous overhead:
For every published notification, the publisher and all subscribers
would generate a new key, the subscribers would then
register their subscriptions, and finally the publisher would send
the notification. Secondly, even the cheapest instances of these
protocols has high costs for single invocations.

\junk{
\section{Leakage}

Here is a list of exactly what is leaked in our system/protocol:

\begin{itemize}
\item Identities of all the publishers must be known to each subscriber and vice versa. Of course
the broker must know the identities of all.

\item Number of pieces of content produced by each publisher in any given time period - this will be known
to everybody, in particular each subscriber will have maintain their index in the
one-time pad they share with each publisher and produce a blinded $BG-C$ to provide
to the broker. Further this can also be monitored by any external spectator. Note that
this is not specific to our protocol but to any protocol, unless the production of 
content is obfuscated through some such technique as batching or generation of chaff content etc.

\item Maximum size of the data produced by any publisher. Since we standardize on
message lengths for the protocol across all publishers and subscribers, by looking
at the bits exchanged any external spectator will be able to tell the maximum data size.

\item Maximum size of the predicate of any subscriber. Again, since we standardize on
message lengths for the protocol across all publishers and subscribers, by looking
at the bits exchanged any external spectator will be able to tell the maximum circuit size.
\end{itemize}

No information other than the above is leaked.
}
}

\section{Conclusion}
\label{sec:conclusion}
We started out on a research plan to achieve secure publish/subscribe at line speeds. Though we did not reach our goal this paper reports on the substantial progress we achieved - namely information-theoretically secure publish/subscribe for predicated in \NCone. On the theoretical front we achieved a level of expressivity that had not been previously attained and on the practical front we made a substantial advance towards a practical scheme. Our protocols have the benefit of being conceptually clean and simple, so simple that we can capture their complexity without even employing the big-Oh notation, see Table \ref{tab:complexity} . Going forward, the main open question is to come up with a much faster protocol, potentially one that will scale at wireline speeds (Gbps). 

\newpage

\bibliographystyle{abbrv}
\bibliography{SPAR-Match}

\appendix
\section{Barrington Transform}
\label{appendix:barrington}
\subsection{Barrington Transform}
\label{appendix:barrington-transform}
We present  the proof of  Theorem \ref{thm:barrington-transform} based
on the  treatment in \cite{Viola09}.  (Note that the statement  of the
theorem is true for any cycle $\alpha$ and not just the specific cycle
$(2 3 4 5 1)$.

\begin{proof}

We present the proof as a series of lemmas.

\begin{lemma}[The cycle does not matter]
\label{lemma:cycle-doesnt-matter}
  Let $\alpha  , \beta  \in S_{5}$ be  two cycles,  let $f:\{0,1\}^{n}
  \rightarrow  \{0,1\}$.  Then f  is  $\alpha$-computable with  length
  $\ell$ $\Leftrightarrow$ f is $\beta$-computable with length $\ell$.
\end{lemma}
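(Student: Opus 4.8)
The plan is to prove the statement by a conjugation argument. The key observation is that any two cycles of length $5$ in $S_5$ are conjugate to one another: since $\alpha$ and $\beta$ are both $5$-cycles, there exists a permutation $\tau \in S_5$ with $\tau \alpha \tau^{-1} = \beta$ (indeed $\tau$ can be read off directly from the one-line notations of $\alpha$ and $\beta$). This is the single fact the whole lemma rests on, and it is where the hypothesis that $\alpha,\beta$ are genuine $5$-cycles (rather than arbitrary group elements) is used.

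Given this, suppose $f$ is $\alpha$-computable by a group program $(g_1^0,\ldots,g_L^0)$, $(g_1^1,\ldots,g_L^1)$, $(k_1,\ldots,k_L)$ of length $L$, so that $\prod_{i=1}^L g_i^{x_{k_i}} = \alpha^{f(x)}$ for all $x$. First I would left-multiply the first element of each branch by $\tau$ and right-multiply the last element of each branch by $\tau^{-1}$, i.e.\ replace $g_1^{j}$ by $\tau g_1^{j}$ and $g_L^{j}$ by $g_L^{j}\tau^{-1}$, leaving all other elements and all indices $k_i$ unchanged. Then the new product telescopes to $\tau\left(\prod_{i=1}^L g_i^{x_{k_i}}\right)\tau^{-1} = \tau \alpha^{f(x)} \tau^{-1} = (\tau\alpha\tau^{-1})^{f(x)} = \beta^{f(x)}$, using that conjugation is a homomorphism so it respects powers and sends $1_{S_5}$ to $1_{S_5}$. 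Hence $f$ is $\beta$-computable with the same length $\ell = L$. A small edge case worth noting: if $L=1$ then the first and last element coincide and one simply sets $g_1^{j} \mapsto \tau g_1^{j}\tau^{-1}$; the computation is identical.

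The converse direction is immediate by symmetry: apply the same argument with the roles of $\alpha$ and $\beta$ swapped, using $\tau^{-1}$ in place of $\tau$ (since $\tau^{-1}\beta\tau = \alpha$). This gives the ``$\Leftarrow$'' implication with no extra work, so the biconditional follows.

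I do not anticipate a real obstacle here — the only thing to be careful about is the bookkeeping on the endpoints of the group program (making sure the inserted $\tau$ and $\tau^{-1}$ cancel everywhere except at the two ends, and handling the $L=1$ degenerate case), and being explicit that conjugation by a fixed group element is an automorphism of $G$ so that it commutes with taking powers and fixes the identity. The length is manifestly preserved since we modify elements in place and never add or remove any. I would close by remarking, as the paper's parenthetical does, that nothing in this argument is special to $(2\,3\,4\,5\,1)$ or even to $S_5$: it works for any group $G$ and any two conjugate elements $\alpha,\beta \in G$.
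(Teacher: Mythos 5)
Your proposal is correct and follows essentially the same route as the paper: both proofs rest on the fact that any two $5$-cycles in $S_5$ are conjugate (the paper constructs the conjugator $\rho_{\alpha\beta}$ explicitly by matching up the cycle entries), and both absorb the conjugating element and its inverse into the first and last group elements of each branch, leaving the indices and the length unchanged. Your explicit treatment of the $\ell=1$ case and of the reverse direction by symmetry are minor additions; the underlying argument is identical.
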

\begin{proof}
  First observe that $\exists \rho_{\alpha\beta} \in S_{5} $ such that
  $\alpha = \rho_{\alpha\beta}^{-1} \beta \rho_{\alpha\beta}$.

  To   see   this   let   \[\alpha  =   (\alpha_{1},\alpha_{2},   ...,
  \alpha_{5})_{\mbox{cycle}}\]
\[\beta  = (\beta_{1},\beta_{2}, ..., \beta_{5})_{\mbox{cycle}}\]
\[\rho_{\alpha\beta}  := (\alpha_{1}\rightarrow \beta_{1},\alpha_{2}\rightarrow \beta_{2},...,\alpha_{5}\rightarrow \beta_{5}).\]
(With $\alpha = (2 3 4 5  1)$ and $\beta = (3 5 4 2 1)$ we
get $\rho_{\alpha\beta} = (1 3 4 2 5), \rho_{\alpha\beta}^{-1} = (1 4 2 3 5)$.

Suppose                                                            that
$(g_{1}^{0},...,g_{\ell}^{0})(g_{1}^{1},...,g_{l}^{1})(k_{1},...k_{\ell})$
$\beta$-computes $f$; we  claim that $(\rho_{\alpha\beta}^{-1} g_{1}^{0},...,g_{\ell}^{0}
\rho_{\alpha\beta})(\rho_{\alpha\beta}^{-1} g_{1}^{1},...,g_{\ell}^{1}  \rho_{\alpha\beta})$ (with the same
indices $k_i$) $\alpha$-computes $f$. To see this, note that
\[\prod^{\ell} _{i=1} g_{i}^{x_{k_i}}= 1_{G}  \Rightarrow \rho_{\alpha\beta}^{-1} \prod ^{l} _{i=1} g_{i}^{x_{k_i}} \rho = \rho^{-1} \cdot \rho_{\alpha\beta} = 1,\]
\[\prod ^{\ell} _{i=1} g_{i}^{x_{k_i}}= \beta \Rightarrow \rho_{\alpha\beta}^{-1} \prod ^{l} _{i=1} g_{i}^{x_{ki}} \rho = \rho^{-1} \beta \rho_{\alpha\beta} = \alpha.\]
\end{proof}

\begin{lemma} [$f \Rightarrow 1-f$]
  If $f:\{0,1\}^{n}  \rightarrow \{0,1\}$ is  $\alpha$-computable by a
  group program of length $\ell$, so is $1-f$.
\end{lemma}
\begin{proof}
  First apply  the previous  lemma to $\alpha^{-1}$-compute  $f$. Then
  multiply last  group elements  $g^0_{\ell}$ and $g^1_{\ell}$  in the
  group program by $\alpha$. (Note that $\alpha^{-1} = (5 1 2 3 4)$.)
\end{proof}

\begin{lemma}[$f,g \Rightarrow f \wedge g$]
\label{lemma:4times}
  If $f$ is $\alpha$-computable with  length $\ell$ and $g$ is $\beta$
  computable    with   length   $\ell$    then   ($f\wedge    g$)   is
  ($\alpha\beta\alpha^{-1}\beta^{-1}$)-computable with length $4\ell$.
\end{lemma}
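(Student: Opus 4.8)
The plan is to use the standard commutator trick: concatenate four group programs so that, on inputs where both $f$ and $g$ output $1$, the product telescopes into the commutator $\alpha\beta\alpha^{-1}\beta^{-1}$, and on all other inputs it collapses to $1_{S_5}$.

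Concretely, first I would take the hypothesised group program $P_f$ of length $\ell$ that $\alpha$-computes $f$ and the group program $P_g$ of length $\ell$ that $\beta$-computes $g$. Since the inverse of a $5$-cycle is again a $5$-cycle, Lemma~\ref{lemma:cycle-doesnt-matter} applies with the conjugating permutations $\rho_{\alpha\alpha^{-1}}$ and $\rho_{\beta\beta^{-1}}$ and produces a group program $P_f'$ of length $\ell$ that $\alpha^{-1}$-computes $f$ and a group program $P_g'$ of length $\ell$ that $\beta^{-1}$-computes $g$, each carrying over the index sequence of the program it came from. Now let $P$ be the length-$4\ell$ group program obtained by concatenating $P_f$, $P_g$, $P_f'$, $P_g'$ in that order; its index sequence is the concatenation of the four index sequences, and its element pairs are inherited unchanged. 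By the defining property of the four constituents, for every $x$ the value that $P$ computes is
\[ \alpha^{f(x)}\,\beta^{g(x)}\,\alpha^{-f(x)}\,\beta^{-g(x)} . \]

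It remains to do the four-way case analysis on $(f(x),g(x))$. If $f(x)=0$ the displayed product is $\beta^{g(x)}\beta^{-g(x)}=1_{S_5}$; if $g(x)=0$ it is $\alpha^{f(x)}\alpha^{-f(x)}=1_{S_5}$; and if $f(x)=g(x)=1$ it equals $\alpha\beta\alpha^{-1}\beta^{-1}$. In every case the value is $(\alpha\beta\alpha^{-1}\beta^{-1})^{f(x)\wedge g(x)}$, so $P$ is a group program of length $4\ell$ that $(\alpha\beta\alpha^{-1}\beta^{-1})$-computes $f\wedge g$, as claimed. There is no genuinely hard step here; the only points needing care are checking that Lemma~\ref{lemma:cycle-doesnt-matter} is legitimately invocable (which just requires $\alpha^{-1},\beta^{-1}$ to be $5$-cycles) and the bookkeeping that the concatenation is a syntactically valid group program of length exactly $4\ell$. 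The one substantive caveat — relevant to how this lemma gets used downstream rather than to its proof — is that $\alpha\beta\alpha^{-1}\beta^{-1}$ need not itself be a $5$-cycle for arbitrary cycles $\alpha,\beta$, so iterating this construction will later force a careful choice of $\alpha$ and $\beta$ whose commutator is again a $5$-cycle.
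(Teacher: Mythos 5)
Your proof is correct and follows essentially the same route as the paper: concatenate the four programs ($\alpha$-computes $f$, $\beta$-computes $g$, $\alpha^{-1}$-computes $f$, $\beta^{-1}$-computes $g$) and check by case analysis that the product is $(\alpha\beta\alpha^{-1}\beta^{-1})^{f(x)\wedge g(x)}$. Your explicit appeal to Lemma~\ref{lemma:cycle-doesnt-matter} for the $\alpha^{-1}$- and $\beta^{-1}$-computing programs, and your remark that the commutator need not be a cycle (handled later by Lemma~\ref{lemma:cycle-exists}), are just slightly more careful statements of what the paper leaves implicit.
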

\begin{proof}
  Concatenate  $4$ programs: ($\alpha$-computes  $f$, $\beta$-computes
  $g$,   $\alpha^{-1}$-computes   $f$,   $\beta^{-1}$-computes   $g$).
  (f(x)=1)$\wedge$    (g(x)=1)$\Rightarrow$    concatenated    program
  evaluates  to  ($\alpha\beta\alpha^{-1}\beta^{-1}$);  but if  either
  $f(x) = 0$ or $g(x) =  0$ then the concatenated program evaluates to
  $0$. For example,  if $f(x)=0$ and $g(x) =  1$ then the concatenated
  program gives $1\cdot\beta\cdot1\cdot\beta^{-1} = 1$.
\end{proof}

It only  remains to  see that  we can apply  the previous  lemma while
still computing with respect to a cycle.

\begin{lemma}
\label{lemma:cycle-exists}
  $\exists       \alpha,      \beta$       cycles       such      that
  $\alpha\beta\alpha^{-1}\beta^{-1}$ is a cycle.
\end{lemma}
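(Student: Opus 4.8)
The plan is to prove Lemma~\ref{lemma:cycle-exists} by exhibiting an explicit pair of $5$-cycles and verifying by a direct computation that their commutator is again a $5$-cycle. There is no structural difficulty here, only a finite check, so the whole argument amounts to one calculation together with a single remark about cycle types in $S_5$.

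First I would fix $\alpha$ to be the cycle used throughout the paper, $\alpha = (2\,3\,4\,5\,1)$, i.e.\ the permutation $1\to2\to3\to4\to5\to1$, and then pick a second $5$-cycle $\beta$ that is \emph{not} a power of $\alpha$. This last condition is essential: any power of $\alpha$ commutes with $\alpha$, which would make $\alpha\beta\alpha^{-1}\beta^{-1}=1_{S_5}$, and the identity is not a cycle. A convenient choice is $\beta = (1\,2\,4\,5\,3)_{\mbox{cycle}}$, that is $1\to2\to4\to5\to3\to1$; then $\beta^{-1} = (1\,3\,5\,4\,2)_{\mbox{cycle}}$ and $\alpha^{-1} = (1\,5\,4\,3\,2)_{\mbox{cycle}}$. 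Since there are only finitely many $5$-cycles one could equally well argue that a suitable $\beta$ must exist by a short counting/orbit argument, but producing one explicitly is cleaner.

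Next I would compute $c := \alpha\beta\alpha^{-1}\beta^{-1}$ by tracking the image of each of the five points through the four factors in order. Tracing $1$, it is sent successively to $2,\,4,\,3,\,5$, so $c(1)=5$ (the intermediate values depend on the left-to-right versus right-to-left composition convention, but the conclusion does not), and continuing around one checks that the orbit of $1$ under $c$ visits all of $\{1,\dots,5\}$ before returning to $1$, i.e.\ has length $5$. A permutation of $\{1,\dots,5\}$ with no fixed point has cycle type either $(5)$ or $(2,3)$, so exhibiting one orbit of length $5$ already forces $c$ to be a $5$-cycle; this proves the lemma. I expect the only way to slip here is bookkeeping in the four-fold composition — in particular being consistent about the composition convention — so I would write the trace out in full; there is no conceptual obstacle.

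Finally I would note how this lemma closes the proof of the Barrington Transform (Theorem~\ref{thm:barrington-transform}). Combining Lemma~\ref{lemma:cycle-doesnt-matter} (so every sub-program may be taken with respect to one fixed cycle), the $f \mapsto 1-f$ lemma (for NOT gates), and Lemma~\ref{lemma:4times} (for AND gates, whose use for \emph{cycles} is precisely what Lemma~\ref{lemma:cycle-exists} licenses, after one more application of Lemma~\ref{lemma:cycle-doesnt-matter} to convert $\alpha\beta\alpha^{-1}\beta^{-1}$ back to the fixed $\alpha$), an induction on circuit depth $d$ produces a group program of length $4^d$.
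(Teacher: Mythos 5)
Your proposal is correct and takes essentially the same route as the paper: both arguments simply exhibit an explicit pair of $5$-cycles and verify by a direct finite computation that their commutator is again a $5$-cycle (the paper's witness is $\beta = (3\,5\,4\,2\,1)$ in one-line notation rather than your $\beta$, but the choice of witness is immaterial). Your trace checks out --- with your $\alpha,\beta$ the commutator has the single orbit $1\to5\to3\to2\to4\to1$ --- so there is no gap.
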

\begin{proof}
  Let $\alpha  :=(2 3 4 5  1)$, $\beta := (3  5 4 2 1)$,  we can check
  $\alpha\beta\alpha^{-1}\beta^{-1} = \gamma = (3 5 2 1 4 )$ is a cycle.
\end{proof}

The proof of the theorem follows by induction on $d$ using previous lemmas.
\end{proof}

\subsection{$(\alpha, 1)$-preserving Barrington Transform}
\label{appendix:alpha-one-barrington}
The proof of  Theorem \ref{thm:alpha-one-barrington-transform} is basically
the same as the proof of Theorem \ref{thm:barrington-transform} with 
minor modifications to account for the fact that we are now transforming the
circuit into an $(\alpha, 1)$-preserving group program.

\begin{proof}

As before we present the proof as a series of lemmas.

\begin{lemma}[The cycle does not matter]
  Let $\alpha  , \beta  \in S_{5}$ be  two cycles,  let $f:\{0,1\}^{n}
  \rightarrow  \{0,1\}$.  Then f  is  $\alpha$-computable with  length
  $\ell$ $\Leftrightarrow$ f is $\beta$-computable with length $\ell$.
\end{lemma}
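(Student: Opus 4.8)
The plan is to mirror the proof of Lemma~\ref{lemma:cycle-doesnt-matter} (the ordinary ``cycle does not matter'' lemma), adapting the conjugation trick to the $(\alpha,1)$-preserving form. Here ``$\alpha$-computable with length $\ell$'' is read as ``computed by some $(\alpha,1)$-preserving group program of length $\ell$'', whose input gadget is hard-wired to be $\alpha^{x_{k_i}}$; so moving from base cycle $\alpha$ to base cycle $\beta$ forces us to rewrite every interstitial element, not just the two endpoints as in the ordinary case. This is still length-preserving and leaves the index sequence unchanged, which is exactly what the lemma asserts.

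First I would recall the standard fact (already used in the proof of Lemma~\ref{lemma:cycle-doesnt-matter}) that any two cycles in $S_5$ are conjugate: there is $\rho = \rho_{\alpha\beta}\in S_5$ with $\alpha = \rho^{-1}\beta\rho$, equivalently $\beta = \rho\alpha\rho^{-1}$. Since conjugation is an automorphism of $S_5$ it commutes with taking powers, so $\alpha^{m} = \rho^{-1}\beta^{m}\rho$ and $\beta^{m} = \rho\alpha^{m}\rho^{-1}$ for every integer $m$, and in particular for $m\in\{0,1\}$.

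Next, suppose $(g_1,\ldots,g_{L+1}),(k_1,\ldots,k_L)$ is an $(\alpha,1)$-preserving group program of length $\ell$ that $\alpha$-computes $f$, so that $(\prod_{i=1}^{\ell} g_i\cdot\alpha^{x_{k_i}})\,g_{L+1}=\alpha^{f(x)}$ for all $x$. I would set $g_i' := \rho\,g_i\,\rho^{-1}$ for every $i\in\{1,\ldots,\ell,L+1\}$, keep the same index sequence, and expand the product $(\prod_{i=1}^{\ell} g_i'\cdot\beta^{x_{k_i}})\,g_{L+1}'$. The interior factors $\rho^{-1}$ and $\rho$ regroup so that each hard-wired $\beta^{x_{k_i}}$ turns into $\rho^{-1}\beta^{x_{k_i}}\rho = \alpha^{x_{k_i}}$, leaving an overall conjugation by $\rho$ of the original program's value, which carries $\alpha^{f(x)}$ to $\beta^{f(x)}$. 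Hence $(g_1',\ldots,g_{L+1}'),(k_1,\ldots,k_L)$ is a $(\beta,1)$-preserving group program of length $\ell$ that $\beta$-computes $f$. The converse is the same argument with the roles of $\alpha$ and $\beta$ interchanged (conjugating by $\rho^{-1}$), which establishes the ``$\Leftrightarrow$''.

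I do not expect a real obstacle: this is essentially bookkeeping. The one spot that deserves care is the regrouping step --- one should verify that conjugating \emph{every} $g_i$ (including the trailing $g_{L+1}$), as opposed to only the endpoints as in the ordinary-group-program proof, is exactly what is needed to simultaneously convert the base cycle appearing inside the input gadgets and convert the target value, with no leftover factor of $\rho$ at either end.
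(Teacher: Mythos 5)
Your algebra is sound, but note that it proves a different reading of the statement than the one the paper intends, and the paper's own proof is correspondingly different. In the paper's proof only the two endpoints are touched: from $(g_1,\ldots,g_{L+1})$ one passes to $(\rho_{\alpha\beta}^{-1}g_1, g_2,\ldots, g_{L+1}\rho_{\alpha\beta})$ with the same index sequence, and crucially the hard-wired input gadgets are left as powers of $\alpha$. That is, in this appendix ``$\beta$-computes $f$'' is meant as: the product $\bigl(\prod_{i} g_i\,\alpha^{x_{k_i}}\bigr)g_{L+1}$ evaluates to $\beta^{f(x)}$ --- only the \emph{target} cycle changes, never the base cycle supplied in the input slots. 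Under that semantics the endpoint conjugation suffices, since the whole product is conjugated by $\rho_{\alpha\beta}$, carrying $\beta^{f(x)}$ to $\alpha^{f(x)}$ while every interior $\alpha^{x_{k_i}}$ stays put. You instead read ``$\beta$-computable'' as ``computable by a $(\beta,1)$-preserving program whose gadgets are $\beta^{x_{k_i}}$'' and therefore conjugate every $g_i$; that computation is correct for that statement (your telescoping $\rho\,g_i\,\rho^{-1}\cdot\rho\,\alpha^{x_{k_i}}\rho^{-1}$ argument checks out), but it is a different lemma.

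The distinction matters for how the lemma is used. In the $(\alpha,1)$-preserving Barrington Transform, the NOT step and the $f,g\Rightarrow f\wedge g$ step (the analogue of Lemma~\ref{lemma:4times}) concatenate blocks that $\alpha$-, $\beta$-, $\alpha^{-1}$-, $\beta^{-1}$-compute the subfunctions; for the concatenation to remain a single $(\alpha,1)$-preserving program --- so that in FSGP-Match the publisher can always instantiate each slot with $\alpha$ or $1$ independently of the predicate (Definition~\ref{def:fixed-structure}) --- all blocks must keep gadget base $\alpha$. With your version the four blocks would expect inputs encoded with four different cycles, and the base would keep drifting under conjugation at each level of the recursion, so the final program would no longer be $(\alpha,1)$-preserving in the paper's sense. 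Your remark that ``every interstitial element must be rewritten, unlike the ordinary case'' is exactly the move the paper avoids by fixing the gadget base; the fix is simply to adopt the paper's semantics for ``$\beta$-computes'' and then your conjugation collapses to the two-endpoint modification, just as in Lemma~\ref{lemma:cycle-doesnt-matter}.
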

\begin{proof}
The proof is similar to that of Lemma \ref{lemma:cycle-doesnt-matter}.
Let $\alpha  , \beta  \rho_{\alpha\beta} \in S_5$ be as in that lemma.  
Suppose  that $(g_1,  g_2,\ldots,  g_{L+1})$ $(k_1,  k_2, \ldots,  k_L)$
$\beta$-computes $f$ then it  follows, in straightforward fashion that
$(\rho_{\alpha\beta}^{-1}g_1,  g_2,\ldots,  g_{L+1}\rho_{\alpha\beta})$
$(k_1, k_2, \ldots, k_L)$ $\alpha$-computes $f$.
\end{proof}

\begin{lemma} [$f \Rightarrow 1-f$]
  If $f:\{0,1\}^{n}  \rightarrow \{0,1\}$ is  $\alpha$-computable by a
  group program of length $\ell$, so is $1-f$.
\end{lemma}
\begin{proof}
  First apply  the previous  lemma to $\alpha^{-1}$-compute  $f$. Then
  multiply the last  group elements  $g_L$ 
  group program by $\alpha$. 
\end{proof}

\begin{lemma}[$f,g \Rightarrow f \wedge g$]
  If $f$ is $\alpha$-computable with  length $\ell$ and $g$ is $\beta$
  computable    with   length   $\ell$    then   ($f\wedge    g$)   is
  ($\alpha\beta\alpha^{-1}\beta^{-1}$)-computable with length $4\ell$.
\end{lemma}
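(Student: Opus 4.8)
The plan is to adapt the proof of Lemma~\ref{lemma:4times} (the standard Barrington case) to the $(\alpha,1)$-preserving format, the only real work being bookkeeping at the block boundaries. First I would invoke the preceding ``the cycle does not matter'' lemma (for $(\alpha,1)$-preserving programs) to produce four $(\alpha,1)$-preserving group programs, each of length $\ell$: a program $A=(a_1,\dots,a_{\ell+1}),(k_1,\dots,k_\ell)$ that $\alpha$-computes $f$; a program $B=(b_1,\dots,b_{\ell+1}),(k'_1,\dots,k'_\ell)$ that $\beta$-computes $g$; a program $C$ that $\alpha^{-1}$-computes $f$; and a program $D$ that $\beta^{-1}$-computes $g$ (both $\alpha^{-1}$ and $\beta^{-1}$ are again $5$-cycles, so the lemma applies). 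By definition, the value of $A$ on input $x$ is $a_1\alpha^{x_{k_1}}a_2\cdots a_\ell\alpha^{x_{k_\ell}}a_{\ell+1}=\alpha^{f(x)}$, and likewise $B,C,D$ evaluate to $\beta^{g(x)},\alpha^{-f(x)},\beta^{-g(x)}$ respectively.

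Next I would concatenate these four blocks into a single $(\alpha,1)$-preserving group program of length exactly $4\ell$. The index sequence is the concatenation $(k_1,\dots,k_\ell,k'_1,\dots,k'_\ell,k''_1,\dots,k''_\ell,k'''_1,\dots,k'''_\ell)$ of length $4\ell$, and the group-element sequence is $a_1,\dots,a_\ell,\;a_{\ell+1}b_1,\;b_2,\dots,b_\ell,\;b_{\ell+1}c_1,\;c_2,\dots,c_\ell,\;c_{\ell+1}d_1,\;d_2,\dots,d_\ell,\;d_{\ell+1}$; that is, wherever two consecutive blocks meet, the trailing element of the one and the leading element of the next are multiplied into a single $S_5$ element. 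A direct count gives $4\ell+1$ group elements against $4\ell$ indices, so this is a legitimate $(\alpha,1)$-preserving group program of length $4\ell$, and the products telescope so that its value on $x$ is $\mathrm{Value}_A(x)\cdot\mathrm{Value}_B(x)\cdot\mathrm{Value}_C(x)\cdot\mathrm{Value}_D(x)=\alpha^{f(x)}\beta^{g(x)}\alpha^{-f(x)}\beta^{-g(x)}$.

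Then I would finish with the case analysis exactly as in Lemma~\ref{lemma:4times}: if $f(x)=g(x)=1$ the value is the commutator $\alpha\beta\alpha^{-1}\beta^{-1}$; if $f(x)=0$ it collapses to $\beta^{g(x)}\beta^{-g(x)}=1_{S_5}$; and if $g(x)=0$ it collapses to $\alpha^{f(x)}\alpha^{-f(x)}=1_{S_5}$. Hence the concatenated program $(\alpha\beta\alpha^{-1}\beta^{-1})$-computes $f\wedge g$ with length $4\ell$. The only point that is not pure bookkeeping is that, for this lemma to feed the induction proving Theorem~\ref{thm:alpha-one-barrington-transform}, the resulting ``cycle'' $\alpha\beta\alpha^{-1}\beta^{-1}$ must itself be a $5$-cycle so the construction can be reapplied one level up; but that is precisely Lemma~\ref{lemma:cycle-exists}. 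So the main (mild) obstacle I anticipate is simply verifying that the boundary-merging keeps the concatenation in $(\alpha,1)$-preserving form with length precisely $4\ell$ rather than $4\ell$ plus slack.
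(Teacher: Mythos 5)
Your proposal is correct and takes essentially the same route as the paper, whose proof of this $(\alpha,1)$-preserving version simply defers to the standard case (Lemma~\ref{lemma:4times}): concatenate the four programs that $\alpha$-compute $f$, $\beta$-compute $g$, $\alpha^{-1}$-compute $f$, $\beta^{-1}$-compute $g$ (using the cycle-does-not-matter lemma) and run the same case analysis on $\alpha^{f(x)}\beta^{g(x)}\alpha^{-f(x)}\beta^{-g(x)}$. Your only addition is spelling out the boundary bookkeeping (merging $a_{\ell+1}b_1$ and the analogous junctions) showing the concatenation remains an $(\alpha,1)$-preserving program of length exactly $4\ell$, a detail the paper leaves implicit.
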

\begin{proof}
The proof of this Lemma is identical to the proof of Lemma~\ref{lemma:4times}.
\end{proof}

The proof of the theorem follows by induction on $d$ using previous lemmas.
and Lemma~\ref{lemma:cycle-exists}.
\end{proof}

\junk{
\subsection{Explicit values}

Here I explicitly list a set of values that work: 

$1_{S_5} = (1 2 3 4 5)$

$\alpha = (2 3 4 5 1)$ and $\alpha^{-1} = (5 1 2 3 4)$

$\rho_{\alpha\alpha^{-1}} = (1 5 4 3 2) = \rho_{\alpha\alpha^{-1}}^{-1}$ and $\alpha = \rho_{alpha\alpha^{-1}}^{-1} \alpha^{-1} \rho_{\alpha\alpha^{-1}}$

$\beta = (3 5 4 2 1)$ and $\beta^{-1} = (5 4 1 3 2)$

$\rho_{\alpha\beta} = (1 3 4 2 5)$ and $\beta = \rho_{\alpha\beta}\cdot \alpha\cdot \rho^{-1}_{\alpha\beta}$

$\alpha\cdot\beta\cdot\alpha^{-1}\cdot\beta^{-1} = (3 5 2 1 4) = \gamma$

$\rho_{\alpha\gamma} = (1 3 2 5 4)$ and $\gamma = \rho_{\alpha\gamma}\cdot \alpha\cdot \rho^{-1}_{\alpha\gamma}$
}

\clearpage
\newpage
\section{Hamming distance - Performance}
\label{appendix:hamming}
\begin{table}[h]
\centering
\begin{tabular}{||c|c|c||}
\hline \hline
$n$ - bit-vector length & $d$ - circuit depth & OFSGP-Match  - sequence length\\ \hline \hline
$2$ & $5$ & $4096$ \\ \hline
$3$ & $8$ & $393216$ \\ \hline
$4$ & $8$ & $524288$ \\ \hline
$5$ & $12$ & $1.68* 10^8$ \\ \hline
$6$ & $12$ & $2.01*10^8$ \\ \hline
$7$ & $13$ & $9.4*10^8$ \\ \hline
$8$ & $13$ & $1.07*10^9$ \\ \hline
$9$ & $16$ & $7.73*10^{10}$\\ \hline
$10$ & $16$ & $8.59*10^{10}$ \\ \hline
$11$ & $16$ & $9.45*10^{10}$ \\ \hline
$12$ & $16$ & $1.03*10^{11}$ \\ \hline
$13$ & $16$ & $1.12*10^{11}$ \\ \hline
$14$ & $16$ & $1.2*10^{11}$ \\ \hline
$15$ & $16$ & $1.29*10^{11}$ \\ \hline
$16$ & $16$ & $1.37*10^{11}$ \\ 
\hline \hline
\end{tabular}
\caption{Length of group program for computing Hamming distance.}
\label{tab:hamming}
\end{table}

\junk{
\section{Testing Considerations}

Here are some important considerations for the test infrastructure that will be 
used to evaluate us. 

\begin{itemize}
\item Expressivity - our scheme enables a massive gain in productivity by handling all
predicates in \NCone. 
\item Perfect security - our scheme is secure even against an adversary with unbounded 
computational powerr, namely god.
\item Computational complexity - our protocol has an offline precomputation phase and once that is
done the incremental work for each piece of content or subscription is exactly linear
for all three parties involved - the broker, the publisher and the subscriber.
\item Communication complexity - the main deficiency of our protocol is that each subscriber
must provide a blinded $BG-C$ for each piece of data produced by any publisher. This 
constraint naturally prevents matching with past and future pieces of data. Note that
this also prevents correlation of matches across time and across subscribers/publishers.

\end{itemize}

\section{Further Work}

*** The plan is to look at the randomizing polynomials method of Applebaum, 
Ishai and Kushilevitz \cite{AIK04, AIK05}. Rafi Ostrovsky said we can do 
all of P (not just \NCone) but at the cost of obtaining only computational 
privacy not information-theoretic privacy. ***

\begin{figure}[tbp]
\unitlength1cm
\centering
\resizebox*{1.0\textwidth}{!}{\includegraphics{ctrex.pdf}}
\caption{\label{fig:ctrex} Robot movement to achieve connectivity: RSMT is suboptimal}
\end{figure}

Consider 5 nodes shown in Figure \ref{fig:ctrex}(a). Their RSMT is given in
Figure \ref{fig:ctrex}(b) and is 8 units in length. The optimal movement
pattern that yields a connected configuration is shown in  Figure
\ref{fig:ctrex}(c) and it does not involve moving along the edges of the RSMT.

affirmative. (see Figure \ref{fig:split})

\begin{figure}[tbp]
\unitlength1cm
\centering
\resizebox*{0.6\textwidth}{!}{\includegraphics{splitOptimal.pdf}}
\caption{\label{fig:split} Splitting up a connected component may be optimal}
\end{figure}

}

\end{document}